\documentclass[format=acmsmall]{acmart}
\usepackage{booktabs} 
\usepackage[linesnumbered,lined,boxed,commentsnumbered]{algorithm2e}

\SetAlFnt{\small}
\SetAlCapFnt{\small}
\SetAlCapNameFnt{\small}
\SetAlCapHSkip{0pt}
\IncMargin{-\parindent}

\usepackage{amssymb}
\usepackage{amsmath}
\usepackage{graphicx}
\usepackage{subcaption}
\usepackage{multirow}
\usepackage{rotating}
\usepackage[english]{babel}

\graphicspath{{./}{./Image/}}

\begin{document}
\date{}
\title{Early Routability Assessment in VLSI Floorplans: A Generalized Routing Model}

\author{Bapi Kar}
\affiliation{%
  \institution{Indian Institute of Technology Kharagpur}
  \city{Kharagpur}
  \country{India}}
  
\author{Susmita Sur-Kolay}
\affiliation{
  \institution{Indian Statistical Institute Kolkata}
  \city{Kolkata}
  \country{India}}

\author{Chittaranjan Mandal}
\affiliation{%
  \institution{Indian Institute of Technology Kharagpur}
  \city{Kharagpur}
  \country{India}}

\begin{abstract}

Multiple design iterations are inevitable in nanometer Integrated Circuit (IC) design flow until desired printability and performance metrics are achieved. This starts with placement optimization aimed at improving routability, wirelength, congestion and timing in the design. Contrarily, no such practice exists on a floorplanned layout, during the early stage of the design flow. Recently, STAIRoute \cite{karb2} aimed to address that by identifying the shortest routing path of a net through a set of routing regions in the floorplan in multiple metal layers. Since the blocks in hierarchical ASIC/SoC designs do not use all the permissible routing layers for the internal routing corresponding to standard cell connectivity, the proposed STAIRoute framework is not an effective for early global routability assessment. This leads to improper utilization of routing area, specifically in higher routing layers with fewer routing blockages, as the lack of placement of standard cells does not facilitates any routing of their interconnections.

This paper presents a generalized model for early global routability assessment, HGR, by utilizing the free regions over the blocks beyond certain metal layers. The proposed (hybrid) routing model comprises of (a) the junction graph model in STAIRoute routing through the block boundary regions in lower routing layers, and (ii) the grid graph model for routing in higher layers over the free regions of the blocks. 

Experiment with the latest floorplanning benchmarks exhibit an average reduction of $4\%$, $54\%$ and $70\%$ in netlength, via count, and congestion respectively when HGR is used over STAIRoute. Further, we conducted another experiment on an industrial design flow targeted for $45nm$ process, and the results are encouraging with $~3$X runtime boost when early global routing is used in conjunction with the existing physical design flow.

\end{abstract}

\keywords{
Early global routing, hybrid routing model, over-the-block routing, grid graph model, pin access problem.
}

\maketitle

\section{Introduction}
\label{sec:prelim}
Integrated circuit (IC) design for very deep submicron (VDSM) fabrication processes requires multiple iterations, mainly during placement and global/detailed routing or even during post-routing optimization, in order to minimize the routing violations due to several lithographic issues as well as failure to conform to specified performance metrics such as power and speed due excessive wirelength and via count. In the existing physical design (PD) flow \cite{sherw} (see Fig. \ref{fig:pdflow} (a)), global routing \cite{zcao,kast,mcho1,panm1,royj,sherw,xuy,wliu} plays a crucial role for obtaining an acceptable routing solution with minimal/no structural and functional violations. As per this design flow, global routing is traditionally performed, after the placement stage, on a set of nets that define the interconnections between a set of macros and standard cells in the design. 

The global routing model identifies the shortest routing path of a net between the center of a bin to the center of another bin only, which contain the corresponding pins of the net. Hence, no local (intra-bin) routing, connecting a net pin contained in the bin with the bin center, is performed at this stage of the design flow and is pushed to detailed routing stage. This is commonly known as \textit{pin access problem} \cite{chuk}. Few instances of the local/intra-bin routing problem is illustrated in Fig. \ref{intrabin_routing} where the dotted lines in the zoomed-in routing bins denote the missing local/intra-bin routes that were skipped during the global routing stage. This problem may have significant impact on the overall wirelength of the nets, while lack of local congestion estimation within a routing bin and the number of vias required during detailed routing can grossly impact a successful routing closure. Several errors due to design for manufacturibility (DFM) \cite{hchen2} continue to become more critical as the technology nodes gradually shrink, specially due to local routing and congestion hotspots, in addition to severely impacting the speed and power budget. In order to minimize the design iterations, several interleaved global and detailed routing methods such as \cite{bats,jcong,xuy2,ywchang,zhang,zhang3} have been proposed, in addition to integrated placement and global/detail routing frameworks \cite{viswa,panm1,panm2,tlin,liuw} for incremental improvement of the routing solution and faster routing convergence with minimal violation of structural rules and functional specifications.
\begin{figure}[!ht]
\centering
\includegraphics[scale=0.45,angle=90]{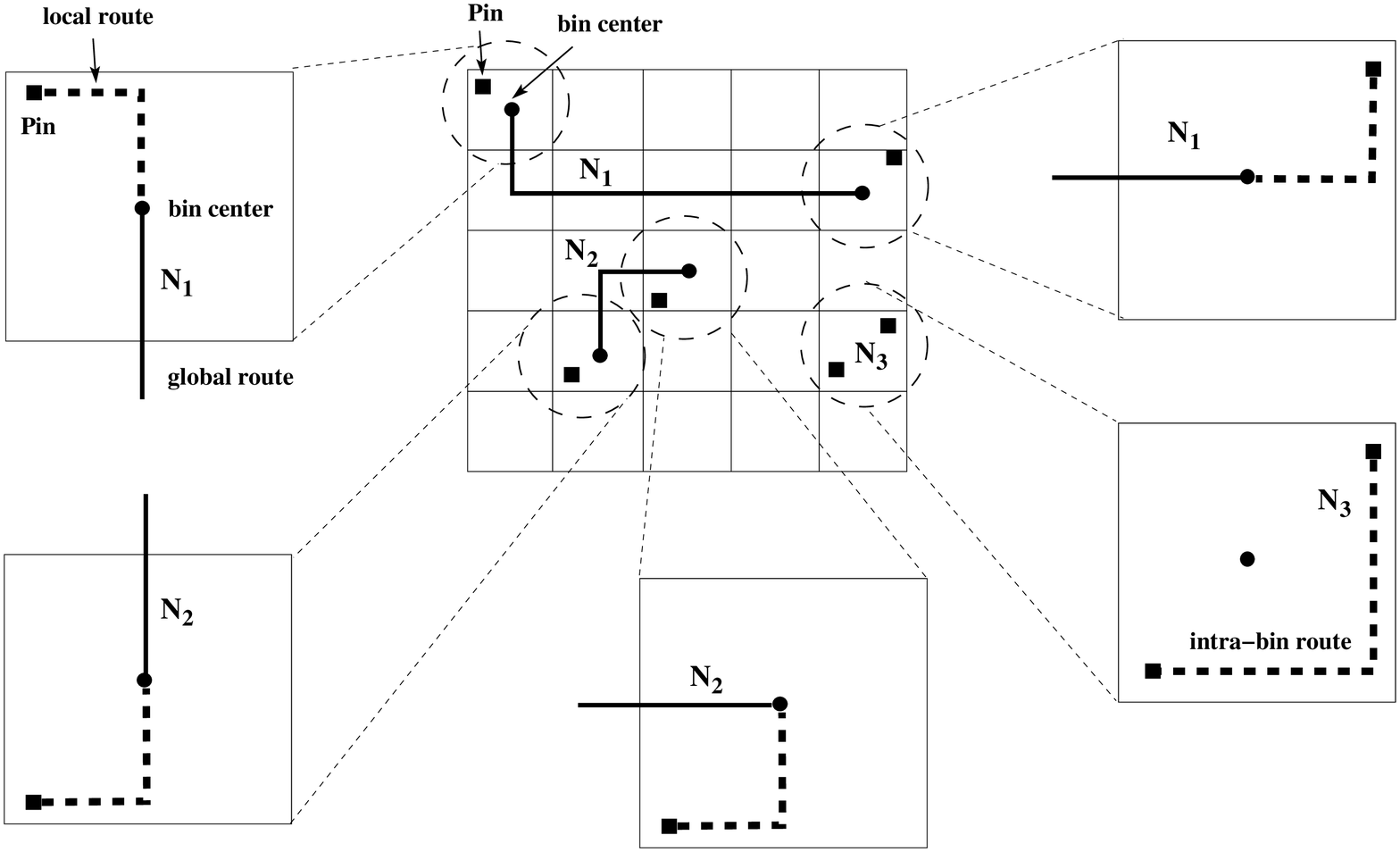}
\caption{Post-placement global routing (solid lines) and intra-bin/local routing (dashed lines)}
\label{intrabin_routing}
\end{figure}

Evidently, all the global routing engines start with an initial placement solution, while iterative placement refinements are done using faster feedback from global/detailed routing engines. No such significant effort has been made for iterative floorplan refinement based on some \textit{early routability estimation} on the floorplans, until recently STAIRoute \cite{karb2} was proposed. This framework provides an insight into the routability of a given floorplan by obtaining an early global routing solution in terms of routing completion, routed wirelength, via count and congestion. These results can be helpful, alike the existing placement and routing framework, in facilitating iterative floorplan refinement as well as guiding the subsequent placement and routing, as per the proposed PD flow illustrated in Fig. \ref{fig:pdflow} (b) (also see how this flow has been adapted to Fig. \ref{fig:olympusflow} in Section \ref{sec:result} for an industrial case study).
\begin{figure}[!ht]
\centering
\includegraphics[scale=0.32]{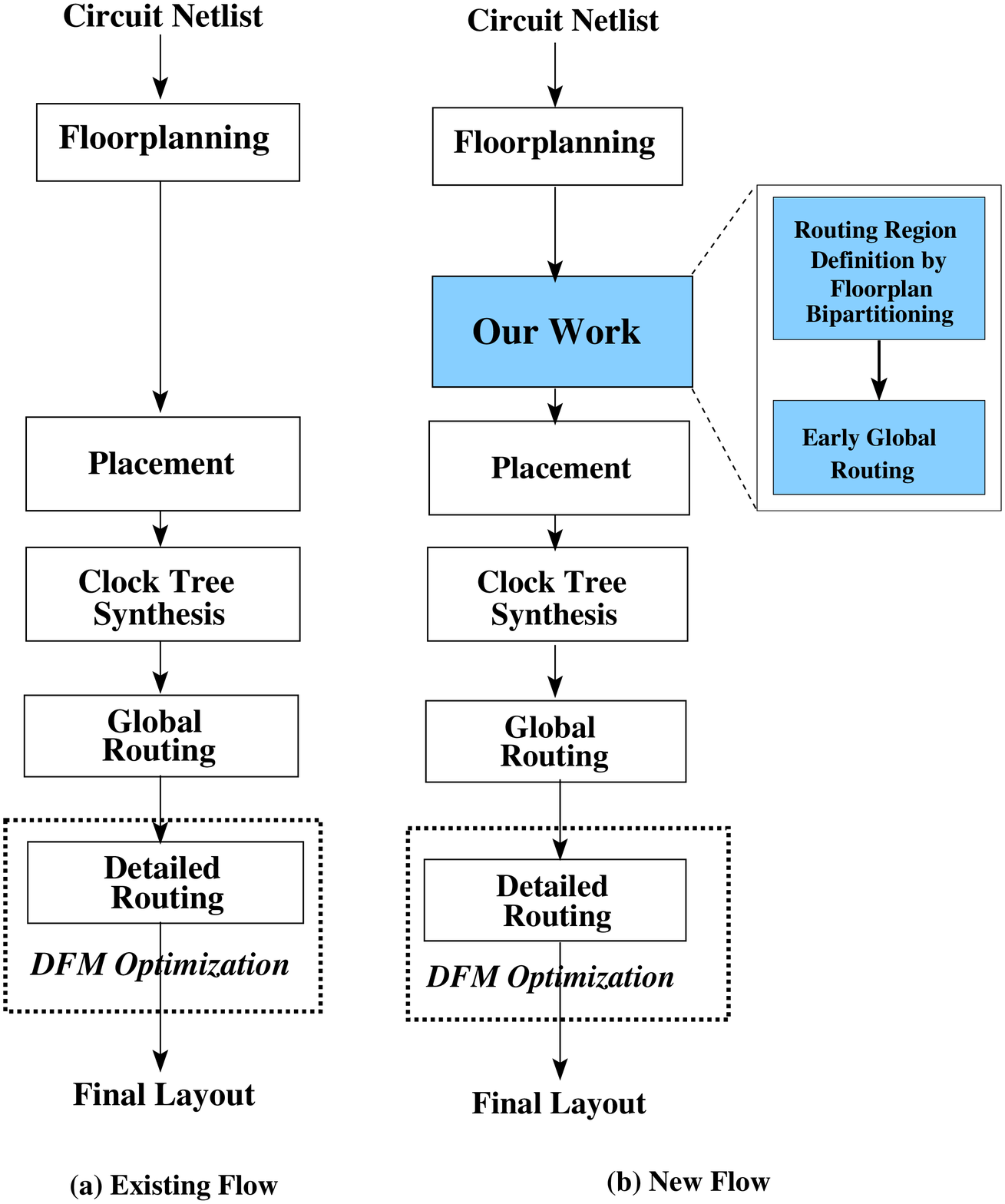}
\caption{VLSI physical design (PD) flow: (a) Existing, and (b) Proposed (aligned with this work)}
\label{fig:pdflow}
\end{figure}

\subsection{Early Global Routing in a Floorplan}
Recently, an effort has been made in assessing early routability of a floorplanned layout by STAIRoute \cite{karb2}, using recursive floorplan bipartitioning \cite{majum1,karb,karb3,karb4} results that identifies a set of monotone staircases in a floorplan as the routing regions. The capacity of these routing regions is obtained from the net cut information of the corresponding node in the bipartitioning hierarchy (MSC tree \cite{karb}). Shortest routing path for a net is identified through these regions assigning the net segments on multiple metal layers depending on the congestion scenario. As cited earlier, these nets were abstracted at the floorplan level and do not account for standard cell connectivity, due to nonavailability of the placement information available for these cells at the floorplanning stage. Instead, these nets define the interconnection between a set of macros (hard blocks) and soft blocks (a cluster of standard cells). In STAIRoute \cite{karb2}, the pin access problem was inherently addressed at the floorplan level, by suitably defining a set of pin-junction edges in the corresponding routing graph (GSRG). These edges facilitate well-defined routing paths between a pin (terminal) of a net to a T-Junction defined in the floorplan \cite{ssk} by the floorplan bipartitioning results.

Although STAIRoute \cite{karb2} is shown to identify a multi-layer routing path of the nets in a floorplan, the corresponding paths are confined in the free regions bounded by the block boundaries only. These regions are identified as monotone staircase routing regions \cite{majum1,karb3}. However, there exist many free routing regions over the macros or the soft blocks (a cluster of highly connected standard cells) over specific routing layers depending on their internal routing, i.e., a block $b_i$ can have free space over a specific routing layer $M_i$ where $M_i$ layers are used for its internal routing. Utilization of these free spaces may improve the routing performance in terms of wirelength, congestion and via count as well as design for manufacturability issues such as \textit{edge placement error} (EPE) \cite{mitra}. At floorplan level, not all the blocks, specially the soft blocks which are basically a cluster of standard cells based on functionality or higher degree of connectedness, come with the number of metal layers used for their respective internal routing. These internal routing for standard cell connectivity are realized traditionally during post-placement routing. Additionally, the lack of placement information of the standard cells during floorplanning leads to an abstraction of the original netlist barring the connectivity to the standard cells. This yields a subset of the original netlist that define the interconnections between a set of macro (hard) blocks and a set of soft blocks only. Moreover, the nets abstracted at the floorplan level are basically much larger nets if all nets are considered that that connect to both macros (including IO pads) and standard cells, as compared to the nets that connect to only the standard cells. Typically the standard cell only nets and also segments of the nets that connect standard cells and macros are masked by the soft blocks. In any case, the early global routing framework always deal with relatively larger nets using multiple routing layers. Therefore, it is reasonable to realize these early methods for assessing the routability of a floorplan which subsequently guides the placement optimization. Intuitively, this early routing results will help in faster convergence of timing driven placement optimization and subsequent global/detailed routing. We see some relevant results of industrial case study presented in Section \ref{sec:result}.

In STAIRoute \cite{karb2}, all the blocks are treated as a set of routing blockages present in all available metal layers. Hence, the routing path of a net is restricted in the monotone staircase regions only, which are located in the adjoining boundary regions of the macros and soft blocks. This routing method uses specified routing layers depending on the vertical/horizontal orientation of the segments in these routing regions. Intuitively, this routing method can be effective when there are fewer number of nets at the floorplan abstraction level interconnecting the macros and soft blocks only, as compared to the original flat netlist seen at the top level that define the connectivity between the macros and standard cells. In practical designs, a macro block usually occupies some, if not all, of the permissible metal layers for routing the nets internal to it; starting from layer $M_1$ up to a layer say $M_j$ below the maximum permissible metal layer $M_{max}$. Therefore, the routing layers \{$M_1 \cdots M_j$\} in the regions over the planar boundary of these macros are blocked for routing the nets, while \{$M_{j+1} \cdots M_{max}$\} layers may be available using horizontal/vertical routing segments. This is also applicable to a soft block due to the routing of its internal nets, apparently blocking a number of metal layers say \{$M_1 \cdots M_k$\} ($M_k < M_{max}$). As cited earlier, floorplan level prior intra-block routing in any soft block is not done due to the lack of placement information of the standard cells in it. Therefore, no \textit{over-the-block} routing is permitted over the soft blocks in these layers. With the increased design complexity, a large number of nets using the same number of routing layers needs to be routed with minimal wirelength and via count during early global routing in a floorplan, over-the-block early global routing of the nets in a floorplan becomes a necessity. 

\subsection{Major Contributions of this work}
In this paper, we present a new early global routing framework called \textit{HGR} in order to realize early global routing of a set of nets in a given floorplan, facilitating \textit{over-the-block} routing of the (sub)nets in the upper metal layers and the routing through monotone staircase routing regions are done in lower layers. As discussed earlier, the internal routing within each block reserve up to a particular routing layer say $M_j$, while HGR utilizes the free space between the boundary regions of the blocks up to $M_j$ and over the blocks above $M_j$ by proposing a novel $3D$ hybrid routing graph. In this graph, the lower layers use the floorplan bipartitioning results, while the upper layers use a modification of the existing grid graph model adapted for floorplans. Like in STAIRoute, this routing model aims to address the pin-access (local/intra-bin) problem by suitably defining a set of relevant edges in this graph. Here, $M_j$ will define the maximum layer already used by all the macros for their internal routing, while the soft blocks will reserve the metal layers up to $M_j$ for their internal routing to be done after the traditional placement of the standard cells is done by the existing design flow (refer to Fig. \ref{fig:pdflow} (b)). Therefore, the routing of the nets internal to the soft blocks defining the interconnection of the standard cells is beyond the scope of this work. 

A comparative example presented in Fig. \ref{hybroute} gives an idea of the proposed work against STAIRoute. This example shows that, in all layers, STAIRoute confines the nets through the staircase routing regions only, while HGR can use both the staircases as well as the free regions above the blocks. It can be noted that if layer $M_j + 1$ and above are uniformly free for routing over all the blocks, the staircase regions prevalent up to $M_j$ cease to exist. In this regard, the grid graph model becomes relevant and hence used in our hybrid routing model.
\begin{figure}[!ht]
\centering
\includegraphics[scale=0.32]{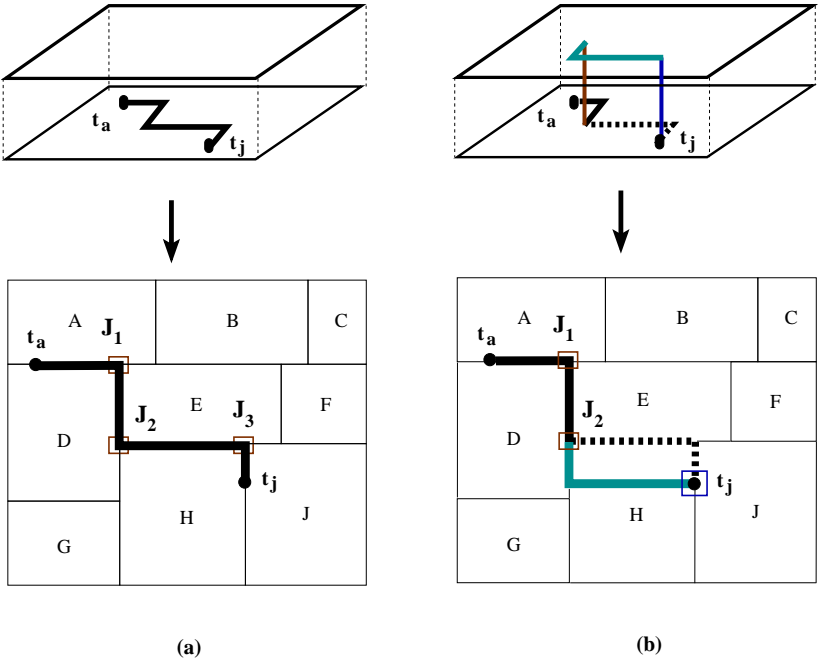}
\caption{Instances of $2$-terminal net ($t_a$, $t_j$) routing using: (a) monotone staircase regions only similar to \cite{karb2}, and (b) both monotone staircase routing in ($M_1$, $M_2$) pair and over-the-block (over $H$ block) routing in $M_3$ and above}
\label{hybroute}
\end{figure}

We organize the rest of this paper as follows: the proposed early global routing approach HGR in the floorplans is described in Section \ref{sec:hgr} for over-the-block routing in a floorplan and also explores the scope of exploring an early abstraction of EPE into the routing penalty for obtaining an EPE-aware early routability assessment method. Experimental results and relevant discussions appear in Section \ref{sec:result}, along with concluding remarks in Section \ref{sec:con}.

\section{This work: A Generalized Model for Early Global Routing}
\label{sec:hgr}
In this section, we discuss the proposed early global routing framework for obtaining \textit{over-the-block} routing of the nets in a floorplan, beyond a certain metal layer $M_j$, while the early routing approach below this layer is similar to that proposed in \cite{karb2}. We consider a scenario when a macro (or soft block) allows early global routing paths for a set of nets over it beyond some specified metal layer say $M_j$, using layer $M_j + 1$ up to $M_{max}$. In this framework, the routing up to $M_j$ is done through the monotone staircase regions similar to STAIRoute i.e. in \{$M_1 \cdots M_j$\} layers while the grid graph model \cite{sherw} is used to obtain the over-the-block routing in the subsequent layers \{$M_{j+1} \cdots M_{max}$\}. In fact, STAIRoute can be seen as a special case of this work when $M_j$ is equal to $M_{max}$, thereby prohibiting the over-the-block routing in any metal layer.

In this work, we adopt a slightly different variant of the existing grid graph model along with the junction graph model presented in STAIRoute \cite{karb2}. The grid graph is overlaid on the junction graph to form a hybrid graph model. Alike in STAIRoute the global staircase routing graph (GSRG) obtained by augmenting the junction graph for each net, this multi-layer hybrid routing graph model is also augmented for each net before identifying the shortest routing path through multiple metal layers using reserved layer model. Our discussion starts with the adoption of the existing grid graph model in this hybrid routing model, followed by the congestion model and the construction of the hybrid routing graph for each net.

\subsection{The Grid Graph Model}
Alike the existing grid graph based routing model, the input layout with the placement details of the standard cells and macros is divided into $m$-by-$m$ global routing bins, where $m$ is predefined. In this paper, we obtain the value of $m$ based on the number of blocks in the floorplan, irrespective of the floorplan topology or the corresponding bipartitioning results, there by removing the dependency of the layout area on the routing efficiency. The value of $m$ is computed as the ceiling of the square root of the number of T-junctions i.e. $\lceil \surd (2n-2)\rceil$, where $n$ is the number of blocks and $2n-2$ is the number of T-junctions in the floorplan \cite{ssk} (also see Lemma $1$ in \cite{karb2}). 

The grid graph $G_g$ = ($V_g$, $E_g$) is defined as follows: each bin corresponds to a vertex $v_p \in V_g$ while each edge $e \in E_g$ denotes a pair of vertices ($v_p$, $v_q$) such that the bins ($g_p$,$g_q$) corresponding to $v_p$ and $v_q$ share a common boundary. Notably, the number of vertices $|V_g|$ and edges $|E_g|$ can be obtained as $m^2$ and $2m(m-1)$ respectively. Hence, both these parameters depend solely on the total number of blocks (macros or soft blocks) $n$ in a given design, not on a particular floorplan topology. 
\begin{lemma}
\label{Lemma1}
For a given floorplan with $n$ blocks, the grid graph $G_g$ can be constructed in $O(n)$ time.
\end{lemma}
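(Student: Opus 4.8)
The plan is to show that each of the structural components of $G_g$ --- its vertex set, its edge set, and the incidence data connecting them --- can be enumerated in time linear in $n$, and then observe that the preprocessing needed to fix the grid dimension $m$ is itself $O(n)$ (in fact $O(1)$ arithmetic on the quantity $2n-2$). The key observation is the one already established in the text: $m = \lceil \sqrt{2n-2}\,\rceil$, so $m^2 = \Theta(n)$, and the stated counts $|V_g| = m^2$ and $|E_g| = 2m(m-1)$ are both $\Theta(n)$. Since the output size of $G_g$ is $\Theta(n)$, the best we could hope for is a linear-time construction, and the task is to verify that the natural construction actually meets this bound.

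First I would compute $m = \lceil \sqrt{2n-2}\,\rceil$; this is a single square-root and ceiling operation on an integer derived from $n$, so it is $O(1)$ (or $O(n)$ even under a bit-complexity accounting, which is still within budget). Second, I would allocate and label the $m^2$ vertices, one per global routing bin; indexing the bin in row $i$, column $j$ by the integer $im + j$ gives each vertex in $O(1)$, for $O(m^2) = O(n)$ total. Third, I would generate the edge set by iterating over all bins and, for each bin $g_p$ in position $(i,j)$, emitting the edge to its right neighbour $(i, j+1)$ when $j < m-1$ and the edge to its lower neighbour $(i+1, j)$ when $i < m-1$; this visits each of the $2m(m-1)$ adjacencies exactly once, again $O(n)$. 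The adjacency-list representation is filled in the same pass. Summing the three phases gives $O(n)$.

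The main thing to be careful about --- rather than a genuine obstacle --- is the accounting for the square-root computation and the implicit claim that grid adjacency can be tested in $O(1)$; both are standard under the usual unit-cost RAM model, and I would simply state that assumption. A secondary point worth a sentence is that the construction of $G_g$ described here is independent of the floorplan topology and of the recursive bipartitioning data used for the junction-graph part of the hybrid model: it depends only on the scalar $n$. This is exactly what makes the bound clean, and it is worth flagging since it contrasts with the data-dependent cost one might expect for the staircase-region side of the construction. Thus the proof is essentially a three-phase counting argument, and I anticipate no deep difficulty --- the content is in pinning down that the obvious algorithm has no hidden superlinear step.
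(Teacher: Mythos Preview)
Your proposal is correct and takes essentially the same approach as the paper: both rest on the observation that $|V_g| = m^2$ and $|E_g| = 2m(m-1)$ are $O(m^2) = O(n)$ since $m = \lceil\sqrt{2n-2}\,\rceil$, so the construction is linear in $n$. The paper's own proof is a two-sentence counting argument to this effect; your version simply makes the construction algorithm and the unit-cost assumptions explicit, which is more detailed than necessary here but certainly not wrong.
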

\begin{proof}
It is evident that for a layout partitioned into $m$-by-$m$ routing bins, there are $O(m^2)$ vertices and also $O(m^2)$ edges in the grid graph $G_g$. Hence, its construction takes $O(m^2)$, i.e., $O(n)$ time.
\end{proof}

As discussed earlier, the edge capacity in a planar grid graph model is obtained based on the planar routing blockages in the lowest routing layer pair ($M_1$, $M_2$). They are projected on the routing layers beyond $M_2$ based on the technology defined routing track pitch and metal width. In our version of grid graph model, the routing capacity of each edge $e$ is computed based on the fact that if the corresponding boundary between the designated pair of tiles is fully or partially contained within the bounding box of a net $n_i$, it accounts for a capacity of $1$. This is due to the fact that the net can take have a potential routing path through any of these bins, contained within the bounding box of the net \cite{westra,liz}. In this way, the capacity of all the edges is computed for all the nets $N$ = \{$n_i$\} before the routing process starts. The routing of net (segments) using this grid graph model is done by only L/Z shape pattern routing between a pair of pins or junctions or even bin centers in layers beyond $M_2$.

\subsection{The Junction Graph and Congestion Model}
We revisit the junction graph $G_j$ = ($V_j$,$E_j$) defined in \cite{karb2} as below:\\
\qquad $V_j$ = $\{J_p\}$, corresponds to a set of T-junctions, and\\
\qquad $E_j$ = \{\{$J_p$,$J_q$\} $|$ a pair of adjacent junctions $\{J_p, J_q\}$ containing a vertical/horizontal segment $s_k$ of a monotone staircase $C_m$ between them\}. 

The weight of each edge $e \in E_j$ is computed as:
\begin{equation}
\label{Equation2}
wt(e) = length(s_k)/(1-p_e)
\end{equation}
where $p_e$, the \textit{congestion} through the segment $s_k$ between $\{J_p, J_q\}$, is defined as:
\begin{equation}
\label{Equation2a}
p_e = u_e/r_e
\end{equation}

Here $(1-p_e)$ is defined as the \textit{congestion penalty} on the edge weight for routing a net through $e$. As stated before, the reference capacity $r_e$ for a rectilinear staircase segment is computed from the net cut information of the bipartitioning results. Before routing, $u_e$ is set to $0$ and if a net $n_i$ is routed through the corresponding segment $s_k$, then $u_e$ is incremented by $1$. 

A similar routing penalty as per Equation \ref{Equation2} is applied on the edges in the grid graph model adopted in this work. The corresponding length parameter for any edge $e \in E_g$ between a pair of adjacent bins ($g_p$, $g_q$) is denoted as $length_(e)$ and signifies the distance between the center of the bin pair ($g_p$, $g_q$).

\subsection{The Hybrid Global Staircase Routing Graph (hGSRG)}
We define the proposed routing graph by overlaying the junction graph $G_j$ and the grid graph $G_g$ obtained for a floorplanned layout. We call this routing graph as \textit{hybrid global staircase routing graph} (hGSRG) $G_r^i$ = ($V_r^i$, $E_r^i$). For a given net $n_i \in N$ having $t_i$ pins in it, $G_r^i$ is defined as:\\
\qquad $V_r^i$ = $V_j \bigcup V_g \bigcup {t_i}$, and\\
\qquad $E_r^i$ = $E_j \bigcup E_g \bigcup E_{tjg}$\\

Here, $E_{tjg}$ denotes an additional set of edges between (a) a pin and a junction, (b) a pin and a G-Cell, and (c) a junction and a G-Cell and is denoted as:\\
\quad $E_{tjg}$ = $\{t_i,J_k\} \bigcup \{J_k,g_m\} \bigcup \{t_i,g_m\}$\\
where, $\{t_i,J_k\}$ denotes an edge between a net pin $t_i$ and a T-junction $J_k$ in lower layer group ($M_1$, $M_2$) pertaining to the junction graph $G_j$; $\{t_i(J_k),g_m\}$ denotes a vertical edge between a pin $t_i$ (junction $J_k$) and a G-Cell $g_m$ such that $t_i$ ($J_k$) in lower metal layer group ($M_1$,$M_2$) is located within the planar boundary of the bin $g_m$. The edges $\{t_i(J_k),g_m\}$ in this hybrid routing model facilitates the local routing (pin accessibility) within the bin. The steps for the construction of the proposed routing graph hGSRG is illustrated in Fig. \ref{hybridgraph}.
\begin{figure}[!ht]
\centering
\includegraphics[scale=0.4]{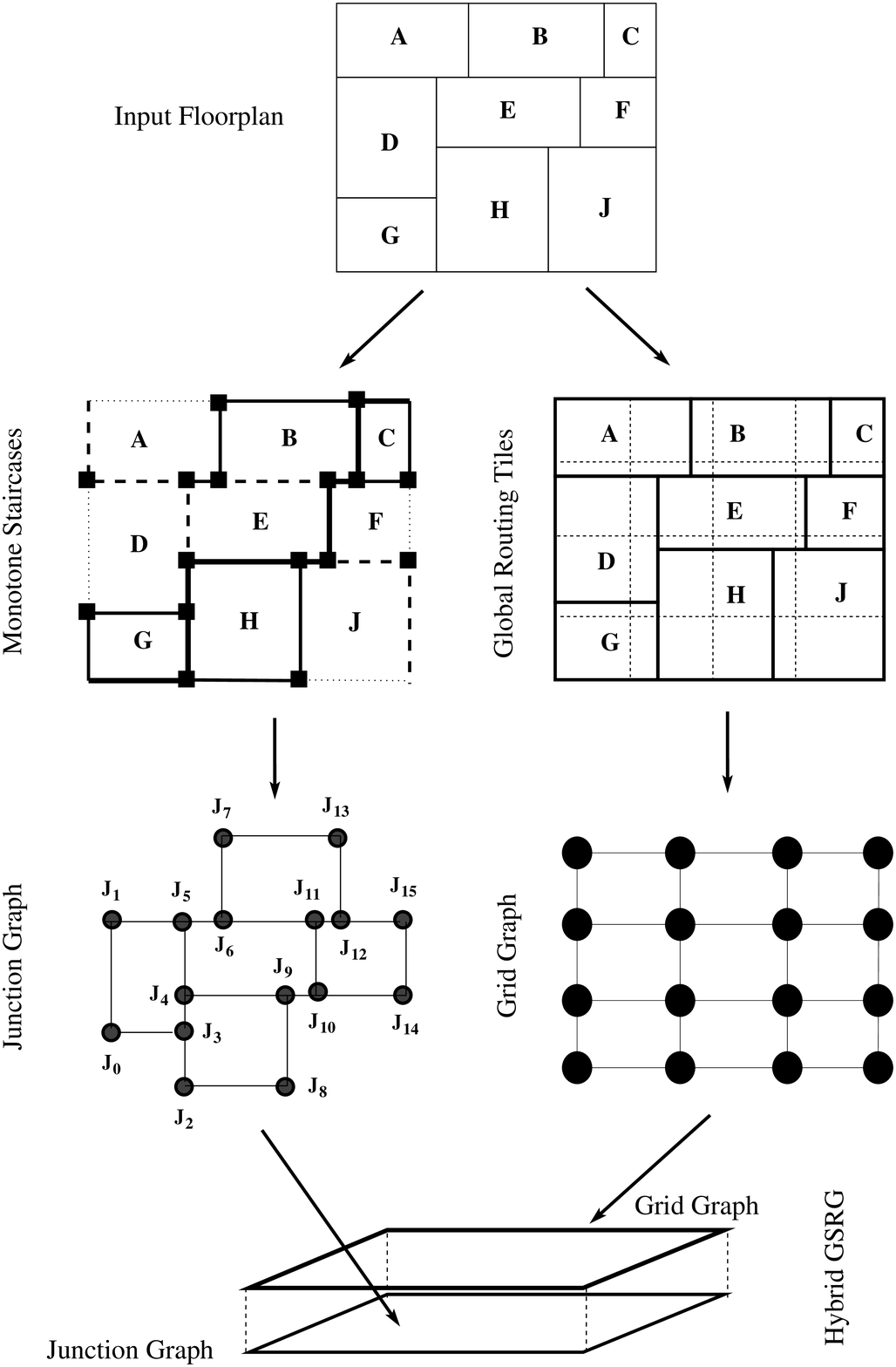}
\caption{Construction of hybrid Global Staircase Routing Graph (hGSRG)}
\label{hybridgraph}
\end{figure}

\subsection{Local/Intra-bin Routing}
Another important aspect of this framework is the routing demand within a bin, i.e., local congestion computation based on the proposed local resource reservation (similar to track reservation \cite{weiy}). Since the routing capacity and demand are the parameters related to bin boundary edges, these local routes are not entitled to utilize them as per the existing grid graph model. On the other hand, our routing model allows us to use grid graph edges for routing the nets beyond some specified routing layers such as $M_3$ beyond the layers used for monotone staircase routing such as ($M_1$,$M_2$) layer pair, as well as obtain the local routes within a bin. The grid graph edges are used to route the net segments that were not routed in ($M_1$,$M_2$), but between the centers of the corresponding bins. These net segments may either be between two junctions or between a pin and a junction. Therefore, the remaining bin center to pin (junction) edges are used to move the net segment to upper layers ($M_3$ and above) with an additional via overhead. As mentioned earlier, all the local routes in this work use L-shaped patterns for minimal via overhead. This is illustrated in Fig. \ref{hybroute_local} (c).
\begin{figure}[!ht]
\centering
\includegraphics[scale=0.45]{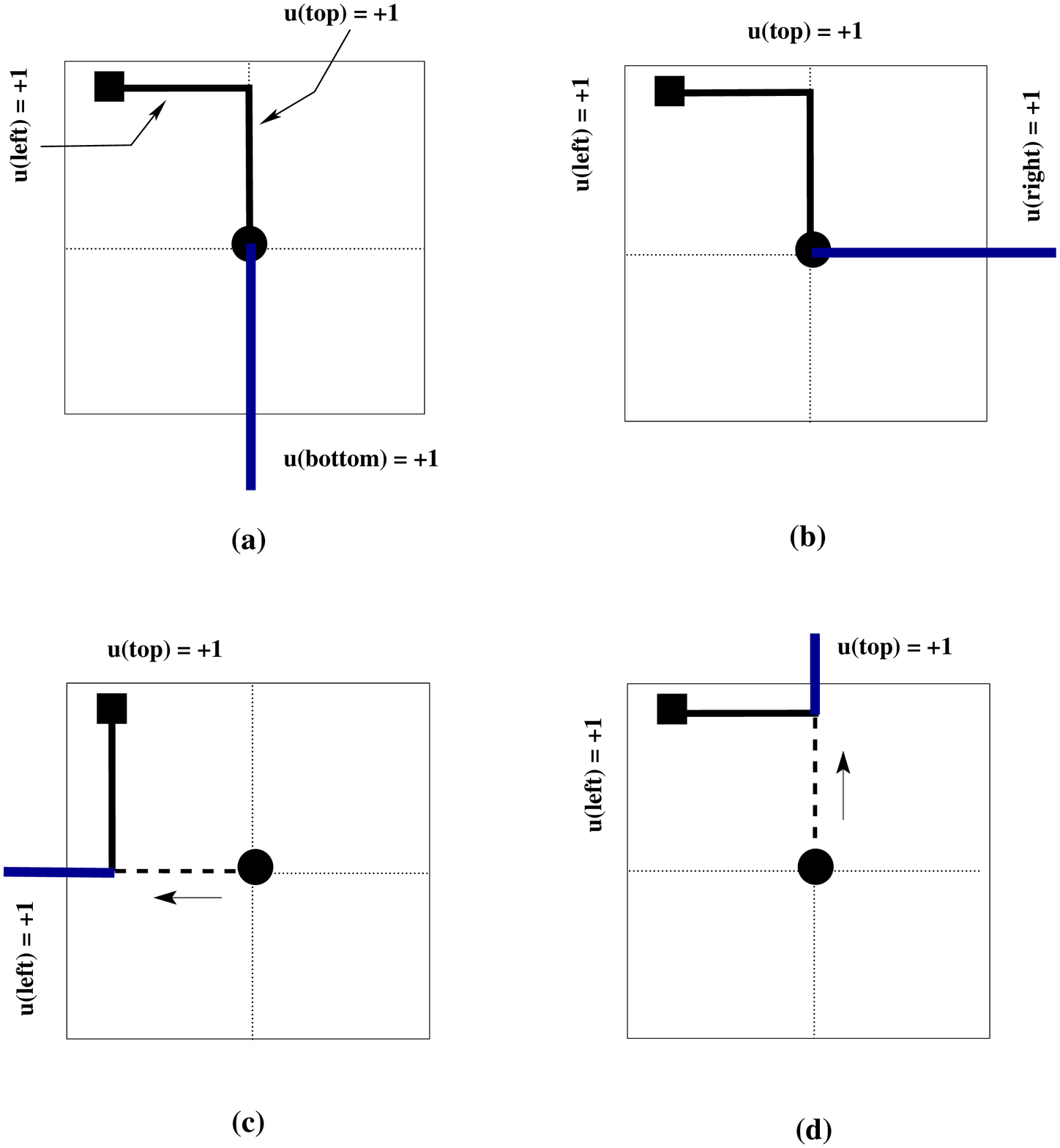}
\caption{Instances of local routing within a bin using routing demand reservation: (a) - (d) enumerates the instances of a fixed pin location vs. bin-to-bin grid routing (in blue)}
\label{intrabin_routing_cap}
\end{figure}

In Fig. \ref{intrabin_routing_cap}, we illustrate the routing demand $u$ for each boundary of a bin after local routing of a net is performed within the bin. This example shows that the position of the pin (junction) in one of the four quadrants of the bin and the global route of the net terminating on the bin center determine the reservation of routing demand $u$ of the corresponding boundary edges; example (a) shows that unit routing demands in the top and left edge are reserved for local routing while bottom edge demand is meant for global route, (b) uses the same like in (a) but with right edge for global route. The example of (c) and (d) are special cases that are closely related to the edge being used for global route with respect to the position of the pin (junction). While the local routing instance in (c) shows that it utilizes the same left edge capacity for both local route and the global route segments and top edge capacity for vertical demand of the local route, (d) depicts a similar case for the global route at the top edge share with vertical segment of the local route, and left edge for horizontal segment of the local route. In these cases, no other edge capacity is relevant and hence not reserved as dictated by zero values in the demand in these edges. In this example, we also note that the wirelength is further minimized (as dictated by the arrow and the dotted line) due to common segment length between global and local routes.

\subsection{Illustration of the proposed routing method}
We illustrate the working of the proposed early global routing method HGR in Fig. \ref{hybroute_local} using a $2$-terminal net ($t_a$, $t_j$). In this example, we assume that this routing method considers only ($M_1$, $M_2$) pair for routing through the monotone staircase regions only, due to the inherent routing blocks due to the macro/soft blocks. As in Fig. \ref{hybroute_local} (a), there are two routing paths between the terminal pair:\\
(i) as per STAIRoute \cite{karb2} obtains ($t_a$, $J_1$, $J_2$, $J_3$, $t_j$) of the net is entirely confined within the monotone staircase routing regions, i.e. through the boundary regions, as denoted partially by black solid line as ($t_a$, $J_1$, $J_2$) and dashed line ($J_2$, $J_3$, $t_j$). This is also illustrated earlier in Fig. \ref{hybroute} (a) by an entirely solid poly-line, by visualizing the dashed poly-line as solid.\\
(ii) here dotted line is used to denote illustrated that this path is congested in lower metal layer pair say ($M_1$, $M_2$) and this framework is capable of finding an alternative path through the higher routing layers (say $M3$ and above), but over the block $H$. Notably, STAIRoute would route the partial routing segment, denoted as the dashed line ($J_2$, $J_3$, $t_j$), in higher layer pairs say ($M_3$, $M_4$). This is due to the fact that its routing model is based only on the Junction graph model which allows routing through the monotone staircase boundary regions only. 
\begin{figure}[!ht]
\centering
\includegraphics[scale=0.40]{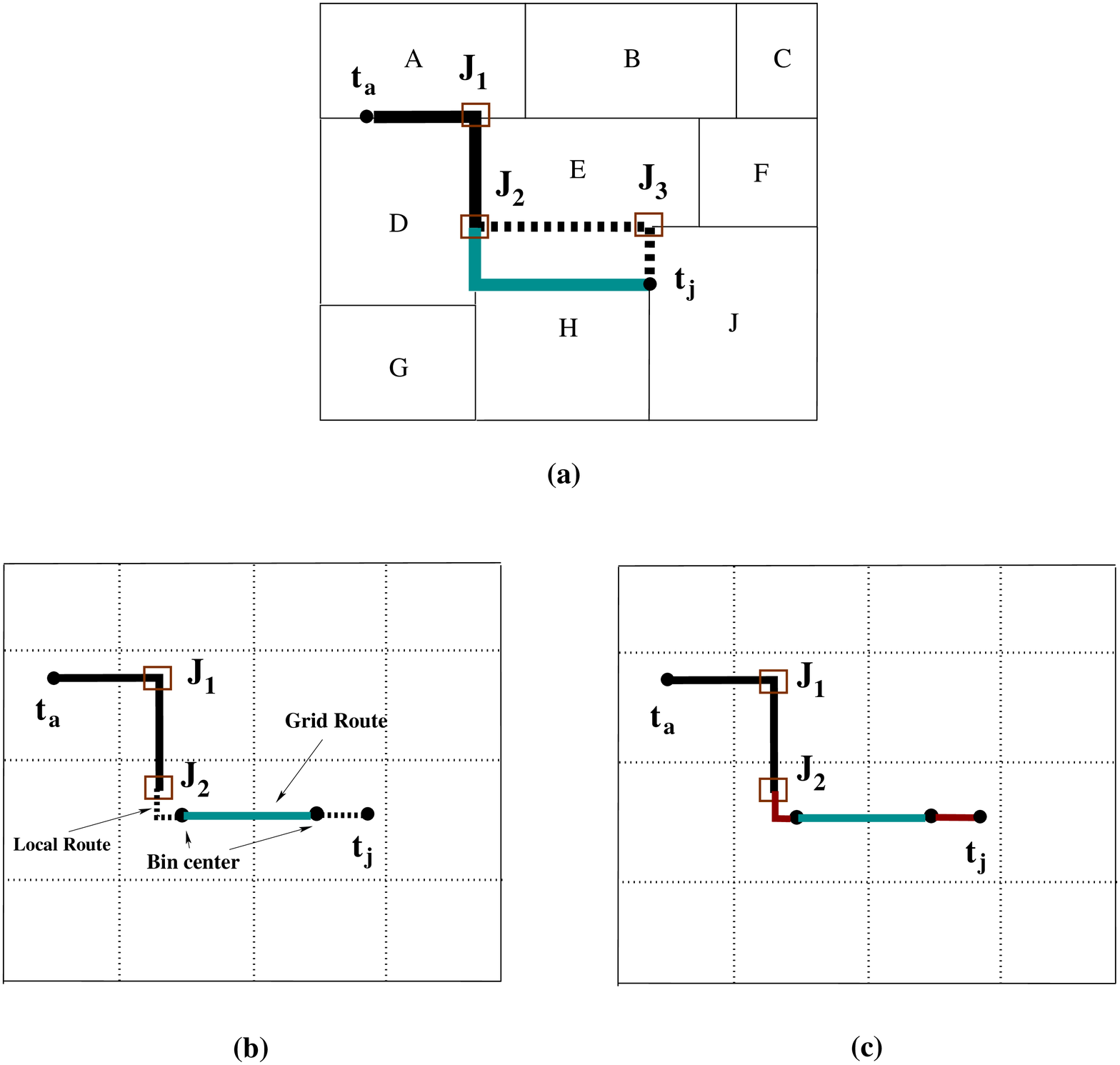}
\caption{Early global routing of $2$-terminal net ($t_a$, $t_j$): (a) monotone staircase routing (partly dotted) and over-the-block routing (over $H$ block), (b) directional routing between bin centers, and (c) local routing for the pin $t_j$ and the T-junction $J_2$ with the corresponding bin centers}
\label{hybroute_local}
\end{figure}

Despite that, both the routing paths incur same wirelength and via count for same number of layer switch, providing more options of alternative routing paths without having any additional routing cost. On the contrary, if STAIRoute is not able to route along ($J_2$, $J_3$, $t_j$) in either $M_3$ or $M_4$ or both due to prevailing congestion scenario, subsequent permissible layer(s) and hence more vias will be used for this interconnection. But HGR will use the free region over the block $H$ in ($M_3$, $M_4$) layer pair only. This implies that fewer metal layers may be used by HGR for overall routing completion as compared to STAIRoute. 

For utilizing the free space over the block $H$, Fig. \ref{hybroute_local} (b) shows how this hybrid routing graph model helps in identifying the remaining routing path ($J_2 \rightsquigarrow t_j$) through the routing bins in upper metal layers ($M_3$, $M_4$) with the help of the grid graph model used in it. Although, the grid graph model obtains a routing path between the centers of a pair of bins, where one bin belongs to $J_2$ while the other contains $t_j$. Till now, the routing between the respective bin centers and $J_2$ ($t_j$) are not done. This is an example of pin-access problem, also commonly known as intra-bin or local routing problem. The third instance in Fig. \ref{hybroute_local} (c) shows the local routing between $J_2$ with one bin center and $t_j$ with the other bin center, both are done in ($M_3$, $M_4$) pair, with the help of the corresponding pin/junction to G-Cell edges (in $\{t_i,g_m\} \bigcup \{J_k,g_m\}$).

\subsection{The Algorithm: HGR}
In Algorithm \ref{Algorithm2}, we summarize the steps for the proposed early global routing method \textit{HGR} (\textit{Hybrid Global Router}). Similar to STAIRoute \cite{karb2}, Dijkstra's shortest path algorithm \cite{cormen} is used to identify the shortest routing path of a $2$-terminal net (segment) in the proposed hybrid routing graph. We also use a similar multi-terminal net decomposition approach in order to identify $2$-terminal net segments as proposed in \cite{karb2}, for the identification of the Steiner tree topology. The set of nets $N$ are ordered first according to net-degree, and then HPWL in the given floorplan based on the bipartition hierarchy. This algorithm takes the junction graph $G_j$ and the grid graph $G_g$ as inputs which are already obtained using the floorplan bipartitioning results obtained by one of the existing works \cite{karb3,karb4} with suitable trade-off values for minimal bend routing. For multi-terminal nets, we use a multi-terminal net decomposition method similar to that proposed in \cite{karb2}. According to this method, for each valid terminal pair, i.e., the valid edge in the resulting spanning tree, we apply this two terminal hybrid routing method, followed by Steiner point identification.

\begin{algorithm}[!ht]
\SetAlgoLined
\SetKwData{Left}{left}\SetKwData{This}{this}\SetKwData{Up}{up}
\SetKwFunction{Union}{Union}\SetKwFunction{FindCompress}{FindCompress}
\SetKwInOut{Input}{input}\SetKwInOut{Output}{output}

\Input{Junction Graph $G_j(V_j$,$E_j$), Grid Graph $G_g(V_g$,$E_g$), an ordered set of nets $N$ = $\{n_i\}$}
\Output{Early global routing of each $t$-terminal ($t\geq2$) net $n_i \in N$}
\BlankLine

Initialize routed (unrouted) net count $c_r$ ($c_u$) to $0$\\
\For{each ordered net $n_i \in N$}{
Construct hybrid GSRG $G_r^i$ for $n_i$\\
\If{$n_i$ is a $2$ terminal net}{
 Identify the shortest routing path using dijkstra single source shortest path algorithm \cite{cormen} on $G_r^i$\\
 \If{There exists a shortest path}{
 Increment $c_r$\\
 Compute netlength, and via count\\
 Update routing demand $u$ for each routing resource along this path\\
 }
 \Else{
 Skip this net and increment $c_u$\\
 }
 }
\Else{
  Construct the node graph $G_c^i$\\
  $T_c^i$ = MST($G_c^i$)\\
  \For{each valid $2$-terminal pairs $(t_j,t_k) \in T_c^i$}{
  Identify the shortest routing path on $G_r^i$\\
  \If{There exists a shortest path for $(t_j, t_k)$ terminal pair}{
  Compute routed netlength, and via-count\\
  Update routing demand $u$ for each routing resource along this path\\
  }
  \Else{
  Skip this net and increment $c_u$\\
  }
  }
  Increment $c_r$\\
  Identify the \textit{Steiner Point(s)} for all two terminal routed net segments\\
  Recompute netlength and via-count for $n_i$\\
 }
 }
Compute routing completion (ratio of $c_r$ and $|N|$) and congestion (see Eqn. \ref{Equation2a}) for all the routing regions across the metal layers.\\
\caption{HGR: An Early Global Router for over-the-block routing in Floorplans}
\label{Algorithm2}
\end{algorithm}

After successful routing of a net $n_i$, the capacity of the corresponding segments in the monotone staircases and grid edges along the routing path of $n_i$ is updated by $1$. In this work, we assume that the pins of a block are located in metal layer $M_1$ or $M_2$. The pins residing in lower metal group ($M_1$, $M_2$) are associated with the junction graph, as shown in Fig. \ref{hybroute_local}, forming pin-junction edges similar to the graph model used in STAIRoute. The pin-bin and junction-bin edges in $G_r^i$ are constructed by identifying the pins/junctions within the corresponding bin, by overlaying the grid graph on the junction graph. The routing cost of these edges, being the vertical connecting edges between $G_j$ and $G_g$, is simply the planner length between the pin/junction location and the center of the bin and the number of vias incurred due to routing through multiple layer groups. 

Alike the existing post-placement global routers, no capacity constraints for these edges are considered as congestion through these vertical edges has little significance, except via overhead. The main contribution in this work is that we use $G_j$ for identifying the (partial) routing path of $n_i$ in lower group of metal layers such as $M_1$-$M_2$ only, while $G_g$ is used to route those nets in the subsequent higher metal layers beyond $M_2$. The routing through the grid graph takes place when a segment of a net cannot be completed in $M_1$-$M_2$ layers, obeying the congestion model in the corresponding rectilinear staircase segments. Although ($M_1$, $M_2$) layer pairs have been used in this work, additional layer pairs such as ($M_3$, $M_4$), ($M_5$, $M_6$) etc. can be used until ($M_{max -1}$, $M_{max}$). When ($M_{max -1}$, $M_{max}$) is also used, this model resembles with STAIRoute leading to no over-the-block early global routing possible in the floorplans.

\begin{theorem}
\label{Theorem1}
Given a floorplan with $n$ blocks and $k$ nets with at most $t$-terminals ($t\geq 2$), HGR takes $O(n^2kt)$ time for finding the routing path of all the nets.
\end{theorem}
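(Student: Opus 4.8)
The plan is to charge the total running time net by net and then sum the contributions over the $k$ nets. The first step is to bound the size of the hybrid routing graph $G_r^i=(V_r^i,E_r^i)$ that Algorithm~\ref{Algorithm2} builds for a net $n_i$. From the definition in Section~\ref{sec:hgr}, $V_r^i=V_j\cup V_g\cup\{t_i\}$: the junction graph supplies $|V_j|=2n-2$ T-junctions, the grid graph supplies $|V_g|=m^2$ G-cells with $m=\lceil\sqrt{2n-2}\,\rceil$ so that $|V_g|=O(n)$ by Lemma~\ref{Lemma1}, and the net contributes its $t_i\le t$ pin vertices, which at the floorplan abstraction is $O(n)$ since a net spans at most the $n$ blocks. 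Hence $|V_r^i|=O(n)$. Likewise $|E_g|=2m(m-1)=O(n)$, $|E_j|=O(n)$ by the structural bounds of \cite{karb2}, and the extra set $E_{tjg}$ contributes $O(n)$ junction-to-cell edges plus $O(n)$ pin-to-junction/pin-to-cell edges per terminal, i.e.\ at most $O(n^2)$ in all; in particular $G_r^i$ can be assembled from the pre-supplied $G_j$ and $G_g$ in $O(n^2)$ time by a single pass that locates every pin and junction inside its bin.

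Next I would bound the work done for a single net. For a $2$-terminal net the algorithm makes one call to Dijkstra's algorithm on $G_r^i$; with the ordinary array-based priority queue this costs $O(|V_r^i|^2)=O(n^2)$, and propagating the routing demand along the returned path costs a further $O(n)$. For a net with $t_i\le t$ terminals the algorithm additionally builds the complete node graph $G_c^i$ on the $t_i$ terminals and its MST $T_c^i$ in $O(t_i^2)$ time, then invokes Dijkstra on $G_r^i$ once for each of the $\le t_i-1$ tree edges, i.e.\ $O(t_i\,n^2)$ in total, and finally performs Steiner-point identification over the $\le t_i-1$ routed segments, a lower-order term. Because $t_i\le t$, processing one net costs $O(t\,n^2)$. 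Summing over all $k$ nets, and adding the one-time $O(k\log k)$ cost of ordering $N$ by net-degree and HPWL together with the $O(n)$ cost of the final congestion/completion tally over the $O(n)$ routing resources, yields the claimed $O(n^2kt)$ bound.

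The step that requires the most care is the size accounting for $G_r^i$ together with the choice of shortest-path subroutine: a binary-heap Dijkstra would give the sharper per-call bound $O(|E_r^i|+|V_r^i|\log|V_r^i|)$ and hence a differently shaped total, so the statement's $O(n^2kt)$ is precisely the estimate obtained from the simple quadratic Dijkstra, and it relies on the mild assumption that a single net touches $O(n)$ terminals so that $|V_r^i|=O(n)$. I would also verify explicitly that rebuilding $G_r^i$ from scratch for every net, rather than updating it incrementally, is absorbed into the cost — it is, since $|V_r^i|^2=O(n^2)$ bounds both the construction time and each Dijkstra call. Everything else (the MST on a $t_i$-vertex complete graph, Steiner-point extraction, the per-edge demand updates) is then dominated by the $t_i$ shortest-path computations, closing the argument.
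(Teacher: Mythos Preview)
Your proposal is correct and follows essentially the same decomposition as the paper: bound $|V_r^i|=O(n)$, charge one $O(n^2)$ Dijkstra call per terminal pair, add the $O(t^2)$ MST cost, and sum over the $k$ nets. The only notable difference is that the paper bounds $|E_{tjg}|$ more tightly as $O(t)+O(t)+O(n)=O(n)$ (via a lemma from \cite{karb2} giving $O(t)$ pin--junction edges, plus one pin--bin and one junction--bin edge each), so that the hGSRG construction is $O(n)$ rather than your $O(n^2)$; this sharper count is immaterial to the final bound since the $O(n^2)$ Dijkstra dominates anyway.
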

\begin{proof}
For a given net $n_i$ with $t$ pins (terminals), there are $O(t)$ pin-junctions edges as per Lemma in \cite{karb2}. Again, for $m$ routing bins, with $t$ pins and $2n-2$ vertices in $G_j$, $O(t)$ pin-bin edges and $O(n)$ junction-bin edges can be obtained. For each net $n_i$, using Lemma $2$, the hybrid GSRG $G_r^i$ construction takes $O(t+n)$, i.e., $O(n)$ time, since $t$ = $o(n)$. 

Alike STAIRoute, our implementation of Dijkstra's single source shortest path algorithm takes $O(n^2)$ time. For $t(>2)$-terminal nets, both the construction of $G_c^i$ and finding its MST require $O(t^2)$ time (see \cite{karb2}). Therefore, finding a routing path for a $t$ terminal net requires $O(n + n^2t + t^2)$, i.e., $O(n^2t)$ time. Hence HGR takes $O(n^2kt)$ time for routing $k$ nets.
\end{proof}

It is evident from Theorem \ref{Theorem1} that HGR presented in Algorithm \ref{Algorithm2} has the same time complexity as that of STAIRoute, with a constant time overhead for the construction of the grid graph and identification of the pin (junction)-bin edges. This also takes into account the routing demand update in each of the edges in the hybrid graph pertaining to both the junction graph and the grid graph, once a net is routed successfully.

\subsection{Early Abstraction of Edge Placement Error}
In this section, we study how edge placement errors (EPE) \cite{dding,mcho2,mitra} occur due to inefficient printability issues of sub-wavelength features using the existing optical illumination system using $193nm$ wavelength. These errors are further aggravated due to the congestion scenario in the routing regions. The intensity map in Fig. \ref{fig-Chap5:epe} (a) depicts that the intensity is not uniform under the mask opening while the same is not zero beyond the mask opening. Therefore, it signifies additional metal width of the wire segment beyond its contour (see Fig. \ref{fig-Chap5:epe} (b)). Thus, if a routing region is more congested, there is little scope to cope up with EPE than doing ripup and reroute for some of the nets (or a part of it), as illustrated in Fig. \ref{fig-Chap5:epe} (b). Moreover, EPE related routing blockage to other nets may leave little room for the detailed routing of the adjacent nets. If this problem is neglected during the detailed routing stage, it will cause a failure during DFM closure stage.

\begin{figure}[!ht]
\centering
\includegraphics[scale=0.75]{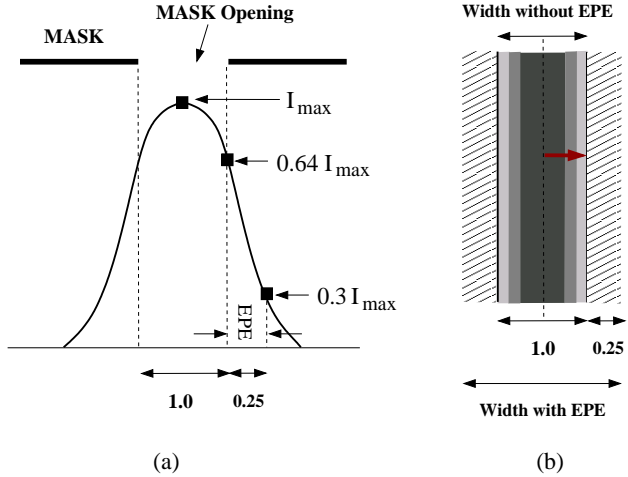}
\caption{Edge Placement Error (EPE): (a) Intensity map vs. mask opening \cite{mitra}, and (b) actual vs. effective metal width with intensity gradient across the normalized width (dark grey to light grey from core to boundary)}
\label{fig-Chap5:epe}
\end{figure}

The intensity map in Fig. \ref{fig-Chap5:epe} (a) depicts the maximum intensity $I_{max}$ at the center of the mask opening \cite{mitra} and the intensity falls off gradually in a pattern similar to $\mathrm{sinc(x)}$ = $sin(x)/x$ function, where $x$ is the distance measured from the center of the mask opening. Notably, the intensity at the mask boundary (edge) is around $64\%$ of $I_{max}$. The intensity gradient across the width (dictated by an arrow in Fig. \ref{fig-Chap5:epe} (b)) signifies the intensity deficiency causing optical proximity errors (OPE). In this model, we consider only the EPE effect in our early routing model in HGR due to nonzero intensity beyond mask opening as the OPE effect due to the said intensity gradient can not be modeled without proper simulation or rule definition at this early stage of physical design flow. According to \cite{mitra}, we consider the threshold point for EPE as the point where the intensity falls to $30\%$ of $I_{max}$, in the region beyond the mask edge, i.e., beyond the wire boundary contour. It amounts to approximately $25\%$ increase in effective metal width on either side of the wire. As a result, it has more interfering effects on the neighboring wire segments of other nets, called EPE induced routing blockage, due to the effect of positive optical interference. This kind of violation due to EPE is discovered during optical rule check (ORC) in the physical verification process of the existing physical design flow. In this case, either wire spreading or rip-up and re-route methods are applied in order to minimize number of such violations. 
\begin{figure}[!ht]
\centering
\includegraphics[scale=0.75]{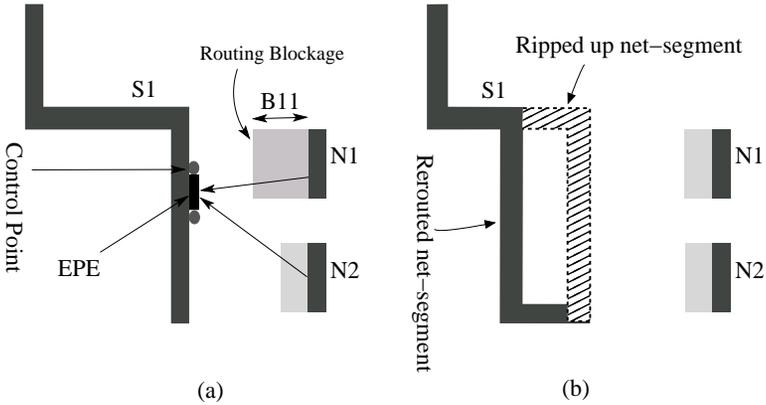}
\caption{Routing blockage due to edge placement error (EPE) \cite{mitra} and Rip-Reroute to alleviate it}
\label{fig-Chap5:epe_blockage}
\end{figure}

In order to incorporate this EPE effect in our early global routing framework HGR, we abstract this effect in the routing demand for assessing the congestion scenario in any metal layer at any given routing instance. This early abstraction of EPE into our proposed routing framework HGR has the potential to reduce the lithography hot-spots due to EPE routing blockage (see Fig. \ref{fig-Chap5:epe_blockage} (a)) at the smaller technology nodes after the detailed routing stage. Therefore, it will reduce the potential overhead of multiple iterations due to wire-spreading or ripup and reroute (RR) during detailed routing (see Fig. \ref{fig-Chap5:epe_blockage} (b)) for EPE hotspor reduction \cite{dding,mcho2,mitra}. In the congestion model of HGR, the penalty due to this EPE cost abstraction is incorporated as follows: after each net is routed through the routing region $e$, its routing demand $u_e$ is incremented by $1.5$ considering the effect of additional $25\%$ metal on the either side. After routing a net $n_{i-1}$, the congestion in the corresponding routing resources along its routing path are computed. The routing graph (hGSRG) $G_{ri}$ for the subsequent net $n_i$ is constructed in order to identify the routing path for it along with the present layer wise congestion scenario in the routing regions. 
\begin{figure}[ht] 
\centering
\includegraphics[scale=0.45]{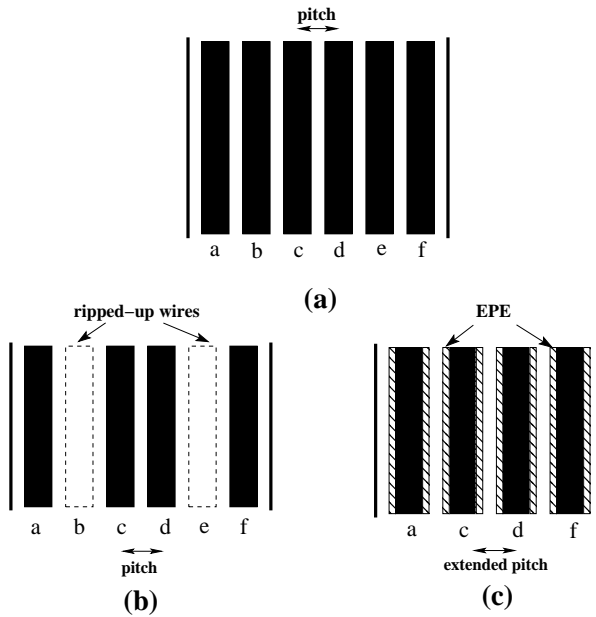}
\caption{Wire spacing/pitch modulation for EPE aware early global routing: (a) all $6$ tracks used, (b) $2$ wires ripped up due to EPE hotspot, and (c) result of EPE aware early global routing}
\label{fig-Chap5:epe_track}
\end{figure}

Due to this process, there will potentially be fewer nets to be ripped up aimed at EPE hotspot reduction during post-routing/layout optimization. The example in Fig. \ref{fig-Chap5:epe_track} (a) showcases this considering six tracks ($a \cdots f$) for routing the nets, without accounting for EPE effect. However, some of the nets, in tracks $b$ and $e$ shown in this example, may be ripped up later during EPE aware routing optimization. On the other hand, as depicted in Fig. \ref{fig-Chap5:epe_track} (c), the proposed EPE aware early routing framework with \textit{extended pitch}, which accounts for taking $u_e$ as $1.5$ for each net routed in region $e$ than using $1.0$ when EPE is not considered, routes only those nets, such as those remained in tracks $a, c, d, f$, after ripping up the other two nets during the existing methods for EPE hotspot reduction. This example shows that a routing solution considering such an early model may potentially reduce the overhead of multiple iterations due to (i) first routing the nets, and (ii) then ripping some of them up in an attempt towards EPE hotspot reduction.

\section{Experimental Results}
\label{sec:result}
In this paper, we used IBM HB floorplanning benchmarks \cite{hben} presented in Table \ref{bench2} for verifying the proposed early global routing method HGR presented in Algorithm \ref{Algorithm2}. These benchmarks were derived from ISPD98 placement benchmark circuits with certain modifications \cite{hben}, in order to form a set of clusters from the standard cells present in the placement benchmark circuits. These clusters not only defined a set of soft blocks, it also helps in defining a subset of the original netlist as the interconnection between the macros and these soft blocks. Notably, these interconnections represent relatively larger nets, some are partial though, barring the standard cell connectivity of the original nets defined in the design. As a result, these modified nets terminate at the boundaries of the macros/soft blocks.
\begin{table}[!ht]
   \caption{HB Floorplanning Benchmark Circuits \cite{hben}}
   \label{bench2}
   \begin{minipage}{\columnwidth}
   \centering
   \begin{tabular}{|c|r|r|r|r|}
      \hline
      Circuit & {\#Blocks} &  {\#Nets} & {Avg.} & {HPWL} \\ 
      Name &   &   & {NetDeg} & {($10^6 \mu$m)} \\\hline      
      \hline
      ibm01 & 2254 & 3990 & 3.94 & 8.98 \\ \hline
      ibm02 & 3723 & 7393 & 4.84 & 22.19 \\ \hline
      ibm03 & 3227 & 7673 & 4.18 & 23.83 \\ \hline
      ibm04 & 4050 & 9768 & 3.92 & 30.82 \\ \hline
      ibm05 & 1612 & 7035 & 5.58 & 18.12 \\ \hline
      ibm06 & 1902 & 7045 & 4.92 & 21.78 \\ \hline
      ibm07 & 2848 & 10822 & 4.44 & 42.48 \\ \hline
      ibm08 & 3251 & 11250 & 4.92 & 46.57 \\ \hline
      ibm09 & 2847 & 10723 & 4.08 & 48.35 \\ \hline
      ibm10 & 3663 & 15590 & 3.85 & 121.23 \\
      \hline
    \end{tabular}
   \end{minipage}
\end{table}

These floorplan instances were generated using \textit{Parquet} \cite{adya,parque} using random seed for each of the circuits, i.e., $ibm01$ to $ibm10$. The proposed algorithm HGR was implemented in C programming language and the experiments were conducted on a Linux platform having Intel Xeon processor running at $2.4$GHz and has $64$GB RAM in it. Since IBM benchmarks \cite{hben} do not provide any pin location details, both STAIRoute and HGR assumed the pins of the blocks connected to the nets at the center of the blocks. A maximum of eight metal layers were permitted to be used for routing the floorplan level nets, using preferred routing directions as horizontal/vertical in odd/even layers. It is also assumed that only two metal layers ($M_1$ and $M_2$ only) out of all the available layers were reserved for internal routing of the macros/soft blocks. For fair comparison, we reran STAIRoute \cite{karb2} with these benchmarks, as the results reported in \cite{karb2} were obtained for much smaller floorplanning benchmark circuits \cite{parque}.

\subsection{Comparison with Existing Early Global Router STAIoute}
First, we present a comparative study between the experimental results obtained by HGR and STAIRoute \cite{karb2}. In this comparison, we consider both with and without abstracted EPE cost in the congestion model (see Eqn. \ref{Equation2}), discussed in the previous section. With these experiments, our aim is study the effectiveness of the proposed generic early global routing framework, which supports over-the-block routing beyond some predefined routing layer, with respect to the existing work STAIRoute \cite{karb2}. In addition to that, we also aimed to incorporate an early abstraction of EPE cost in the routing penalty for both HGR and STAIRoute, as outlined in \cite{karb5} and also presented in the previous section in details. The corresponding results are presented in Table \ref{tab:comp-Results}, for (a) netlength, (b) via count, and (c) average worst congestion \cite{weiy}. 

These results were obtained for $100\%$ routing completion of the nets using up to eight routing layers, ensuring no over-congestion as per the congestion model defined in Eqn. \ref{Equation2}. The results obtained for netlength shows almost identical values obtained by these early global routers for both without and with EPE cost in the routing penalty; HGR shows a slight improvement with an average of $3.5\%$ and $4.5\%$ reduction in wirelength over STAIRoute for both with and without EPE cost respectively. Although both these routers tend to confine the routing paths within the bounding box using one/two/more bends (monotone patterns), the overall netlength varies due to the non-overlapping (non-common) wire segments through the same or different metal layers, as obtained by the multi-terminal net decomposition method (see \cite{karb2} for details). This is apparent as HGR has fewer layer change due to over-the-block routing approach, while STAIRoute uses more layer changes while confining the routing path through the monotone staircase boundary regions only. Accordingly, via count for HGR also shows significant average reduction of $53\%$ and $54.5\%$ respectively over STAIRoute for both the cases, due to fewer layer changes among the wire segments of a net or subnet. 
\begin{table}[!ht]
\scriptsize
  \centering
  \caption{Comparing the Routing Results (with and without EPE cost)}
    \label{tab:comp-Results}
    \begin{minipage}{\columnwidth}
    \begin{tabular}{|l|c|c|c|c|c|c|c|c|c|c|c|c|}
      \hline
      Circuit &  \multicolumn{4}{|c|}{Length($10^6 \mu m$)} &  \multicolumn{4}{|c|}{Via Count($10^5$)} & \multicolumn{4}{|c|}{Worst Avg. Congestion ($wACE4$)} \\  \cline{2-13}
      Name &  \multicolumn{2}{|c|}{ STAIRoute \cite{karb2}} &  \multicolumn{2}{|c|}{HGR}  & \multicolumn{2}{|c|}{ STAIRoute \cite{karb2}} &  \multicolumn{2}{|c|}{HGR} &  \multicolumn{2}{|c|}{ STAIRoute \cite{karb2}} &  \multicolumn{2}{|c|}{HGR} \\  \cline{2-13}
      &  & +EPE & & +EPE & & +EPE & & +EPE &  & +EPE &  & +EPE  \\ \hline
      ibm01 & 11.44 & 11.47 & 11.35 & 11.16 & 2.16 & 2.19 & 1.08 & 1.05 & 0.750 & 0.282 & 0.101 & 0.146 \\ \hline
      ibm02 & 32.18 & 32.27 & 30.73 & 31.26 & 6.17 & 6.28 & 2.82 & 2.79 & 0.917 & 0.294 & 0.100 & 0.135 \\ \hline
      ibm03 & 35.60 & 35.84 & 32.08 & 33.95 & 5.44 & 5.59 & 2.42 & 2.43 & 0.935 & 0.714 & 0.104 & 0.128 \\ \hline
      ibm04 & 39.13 & 39.23 & 39.58 & 37.95 & 7.20 & 7.30 & 3.33 & 3.25 & 0.524 & 0.786 & 0.115 & 0.138 \\ \hline
      ibm05 & 26.43 & 26.45 & 25.94 & 25.73 & 4.01 & 4.06 & 2.08 & 2.11 & 0.203 & 0.277 & 0.108 & 0.158 \\ \hline
      ibm06 & 31.20 & 31.24 & 30.72 & 30.80 & 4.27 & 4.33 & 2.12 & 2.14 & 0.773 & 0.955 & 0.124 & 0.154 \\ \hline
      ibm07 & 56.70 & 56.80 & 56.44 & 54.72 & 7.83 & 7.93 & 3.61 & 3.51 & 0.244 & 0.320 & 0.094 & 0.120 \\ \hline
      ibm08 & 67.81 & 68.06 & 62.78 & 64.64 & 9.35 & 9.52 & 4.09 & 4.21 & 0.647 & 0.569 & 0.082 & 0.117 \\ \hline
      ibm09 & 63.66 & 63.79 & 62.15 & 61.13 & 7.08 & 7.17 & 3.28 & 3.22 & 0.789 & 0.750 & 0.103 & 0.136 \\ \hline
      ibm10 & 153.37 & 154.04 & 141.01 & 133.30 & 10.70 & 11.02 & 4.90 & 4.74 & 0.249 & 0.575 & 0.110 & 0.156 \\ \hline
      Norm. &  &  &  &  &  &  &  &  &  &  &  &  \\
      Geo.  & 1.000 & 1.003 & 0.964 & 0.957 & 1.000 & 1.017 & 0.470 & 0.465 & 1.000 & 0.948 & 0.197 & 0.259 \\
      Mean &  &  &  &  &  &  &  &  &  &  &  &  \\ \hline
    \end{tabular}
    \end{minipage}
\end{table}

In this table, we also present the congestion values measured as the worst average congestion in terms of a parameter called $wACE4$, showing an average reduction of $73\%$ to $80\%$ in case of HGR as compared to that in STAIRoute, in both cases of with and without EPE cost. In this paper, the congestion analysis is motivated by the approach proposed in GLARE \cite{weiy}. The authors in \cite{weiy} defined a parameter called $ACE(x)$ (Average Congestion on Edges) computed for the worst $x\%$ congested edges among all routing layers and the prescribed values of $x$ are one of the values belonging to $\{0.5, 1, 2, 5\}$. In our analysis, we compute an average of $ACE(x)$ for all the prescribed $x$ values and term it average worst congestion $wACE4$. This congestion analysis shows that HGR considers a more realistic approach similar to that adopted to by the post-placement global routing methods, while STAIRoute assesses only congestion in the regions designated as the monotone staircase boundary regions in the floorplan. The congestion values obtained by STAIRoute and hence HGR for lower layers is justified for the lower layer pair ($M_1$, $M_2$) due to the assumption that a very small fraction of of the total layout area is available as the effective routing space, because of the routing space reservation for internal routing in the soft blocks and also the macros whose internal routing has already been done using these two layers. As per this assumption, layers beyond $M_2$ have plenty of free space of the blocks which are effectively used by HGR, while STAIRoute can not use those free space because of its routing model. 
\begin{table}[!ht]
\footnotesize
  \caption{Number of Routing Layers and CPU time (sec) for routing}
    \label{tab:runtime-Results}
    \begin{minipage}{\columnwidth}
    \centering
    \begin{tabular}{|l|c|c|c|c|c|c|c|c|}
      \hline
      Circuit & \multicolumn{4}{|c|}{\#Routing-layers} & \multicolumn{4}{|c|}{CPU time (sec)} \\  \cline{2-9}
      Name  & \multicolumn{2}{|c|}{ STAIRoute \cite{karb2}} &  \multicolumn{2}{|c|}{HGR} & \multicolumn{2}{|c|}{ STAIRoute \cite{karb2}} &  \multicolumn{2}{|c|}{HGR} \\  \cline{2-9}
        & & +EPE & & +EPE & & +EPE & & +EPE \\ \hline
      ibm01 & 6 & 8& 4 & 4 & 2.822 & 2.763 & 10.273 & 10.479 \\ \hline
      ibm02 & 6 & 6& 4 & 4 & 16.359 & 17.144 & 63.529 & 69.391 \\ \hline
      ibm03 & 6 & 6& 4 & 4 & 10.884 & 11.655 & 43.393 & 47.748 \\ \hline
      ibm04 & 6 & 6& 4 & 4 & 21.364 & 23.454 & 90.265 & 81.634 \\ \hline
      ibm05 & 6 & 6& 4 & 4 & 2.799 & 2.841 & 11.375 & 11.502 \\ \hline
      ibm06 & 6 & 8& 4 & 6 & 3.673 & 3.787 & 14.915 & 15.311 \\ \hline
      ibm07 & 8 & 8& 6 & 6 & 12.163 & 12.423 & 51.911 & 50.152 \\ \hline
      ibm08 & 8 & 8& 6 & 6 & 18.102 & 18.087 & 76.065 & 73.402 \\ \hline
      ibm09 & 8 & 8& 6 & 6 & 11.281 & 11.360 & 46.418 & 47.728 \\ \hline
      ibm10 & 8 & 8& 6 & 8 & 26.387 & 27.888 & 126.338 & 126.978 \\ \hline
      Norm. & & &  &  &  &  &  &  \\ 
      Geo.  & 1.000 & 1.029 & 0.700 & 0.749 & 1.000 & 1.032 & 4.114 & 4.157 \\ 
      Mean  & & &  &  &  &  &  &  \\ \hline     
    \end{tabular}
    \end{minipage}
\end{table}

Table \ref{tab:runtime-Results} presents the number of metal layers used out of maximum permissible $8$ layers and the runtime for routing in seconds, for STAIRoute anbd HGR using both with and without EPE costs. These results show that the number of routing layers used for each circuit in case of HGR are $25 - 30 \%$ fewer than that for STAIRoute, considering all the cases. One notable point here is that, for both HGR and STAIRoute, slightly higher number of layers are required in order to obtain $100\%$ when EPE cost is incorporated in the routing penalty. Lastly, as discussed earlier in this paper (see Theorem \ref{Theorem1}), the results also show that HGR needs approximately a constant $4X$ more runtime over STAIRoute to route the same set of nets for the same floorplan instance. 

\subsection{Comparison with Post-placement Global Routers}
In Table \ref{tab-Chap5:Routing-comp}, we present a comparison of the normalized netlength for some of the existing post-placement global routers \cite{mcho1,moffit,ozdal,zhang2,royj,ychang} and the proposed early global routing methods STAIRoute and HGR, based on the corresponding benchmark circuits. Notably, the results for the existing global routers were obtained on IBM ISPD98 placement benchmarks, while HGR and STAIRoute \cite{karb2} obtained the corresponding results on IBM-HB floorplanning benchmarks derived from it \cite{hben}. As cited earlier, a floorplanning benchmark is obtained from a placement benchmark circuit using suitable clustering algorithm, the IBM-HB floorplanning benchmarks were derived from the ISPD98 placement benchmarks \cite{hben}. This conversion incurs significant information loss due to standard cell clustering and the corresponding netlist modification. Therefore, it is unfair to compare the actual netlength obtained for the respective circuits by both the frameworks. Instead, the netlength obtained for each circuit is normalized with respect to the respective Steiner length computed by Flute \cite{cchu}. The results in this table for both STAIRoute and HGR are slightly higher as the nets are routed through the monotone staircase routing regions only for STAIRoute in all the routing layers, while HGR incurred sightly lower netlength value than STAIRoute due to the proposed over-the-block routing approach. However, as we pointed out earlier, the post-placement routers considered all the nets, including the standard cell connectivity, unlike STAIRoute and HGR. Moreover, both STAIRoute and HGR addressed intra-bin/local routing for pin accessibility at the floorplan level, unlike in post-placement global routing which push the pin-accessibility issue to the detailed routing stage. In summary, the main purpose of this study is to understand how these early global routing approaches can assess initial routing metrics like netlength while ensuring $100\%$ routing completion, in terms of the deviation of the overall netlength from the Steiner length without considering any routing blockage. The results show that HGR and STAIRoute deviate $18\%$ and $14.5\%$ from their respective Steiner length, for the corresponding floorplan instances of each circuit. This implies that there is a scope of improvement in the floorplan topology, such as clustering of standard cells and most importantly the locations of the macros/soft blocks), for enhancing the routing performance by using these early routing results in floorplan optimization tools such as Parque \cite{parque}.
\begin{table}
\footnotesize
    \caption{Normalized (w.r.t Steiner length \cite{cchu}) netlength between the existing Global routers and early global routing methods STAIRoute \cite{karb2} and HGR}
    \label{tab-Chap5:Routing-comp}
    \begin{minipage}{\columnwidth}
    \centering
    \begin{tabular}{|l|r|r|r|r|r|r|r|r|}
      \hline
      Circuit & \multicolumn{6}{|c|}{Post-placement Global Routers} &  \multicolumn{2}{|c|}{Early Global Routers}\\ \cline{2-9}
     Name  & \cite{ozdal}$^b$ & \cite{mcho1}$^b$ & \cite{zhang2}$^b$ &  \cite{royj}$^b$ & \cite{moffit}$^b$ & \cite{ychang}$^b$ & STAIRoute\cite{karb2}$^c$ & HGR$^c$\\ \hline
      \hline
      ibm01 & 1.071 & 1.042 & 1.068 & 1.053 & 1.059 & 1.039 & 1.156 & 1.147 \\ \hline
      ibm02 & 1.036 & 1.032 & 1.038 & 1.018 & 1.027 & 1.024 & 1.175 & 1.121 \\ \hline
      ibm03 & 1.007 & 1.007 & 1.007 & 1.005 & 1.010 & 1.005 & 1.175 & 1.059 \\ \hline
      ibm04 & 1.045 & 1.028 & 1.046 & 1.027 & 1.045 & 1.023 & 1.155 & 1.169 \\ \hline
      ibm05$^d$ & - & - & - & - & - & - & 1.198 & 1.176 \\ \hline
      ibm06 & 1.011 & 1.007 & 1.013 & 1.006 & 1.013 & 1.007 & 1.166 & 1.148 \\ \hline
      ibm07 & 1.018 & 1.006 & 1.015 & 1.007 & 1.016 & 1.007 & 1.192 & 1.187 \\ \hline
      ibm08 & 1.005 & 1.008 & 1.009 & 1.006 & 1.010 & 1.006 & 1.197 & 1.109 \\ \hline
      ibm09 & 1.007 & 1.006 & 1.009 & 1.004 & 1.011 & 1.008 & 1.199 & 1.171 \\ \hline
      ibm10 & 1.016 & 1.027 & 1.015 & 1.008 & 1.020 & 1.010 & 1.187 & 1.200 \\ \hline
      Average  & 1.024 & 1.018 & 1.024 &1.015 & 1.024 & 1.014 & 1.180 & 1.149\\ 
     \hline
    \end{tabular}
  \newline
  {\scriptsize $b$ - using ISPD98 global routing benchmarks, $c$ - using IBM-HB floorplanning benchmarks, and $d$ - no result on $ibm05$ of ISPD98 benchmark by the existing global routers}
  \end{minipage}
\end{table}

In the post-placement global routing framework, congestion analysis is done by the number of overflows in the routing edges. On contrary, we used a relative congestion metric (see Eqn. \ref{Equation2a}) inspired by GLARE \cite{weiy}. In this work, we adopted a congestion analysis method using the parameters defined in \cite{weiy} as our congestion model (see Eqn. \ref{Equation2}) does not allow overflow (routing demand exceeding the routing capacity) based computation. Due to this reason, we could not make any direct comparison on congestion overflow for each circuit. Later in this section, we present some overflow based congestion results obtained for some industrial design inputs by an industrial tool which reports both versions of congestion analysis. We also skip the comparison for via counts as both these frameworks are in different scope of operation as per the PD flow. In this comparison, we used only ISPD98 benchmarks whose corresponding IBM-HB floorplanning benchmarks \cite{hben} are the latest ones available for academic research. We are not able to compare the routing results for the latest ISPD07/08 global routing benchmarks, since the corresponding floorplanning benchmarks are not available online. In order to study the effectiveness of the proposed early global routing method HGR with respect to the existing post-placement global routing paradigm, we subsequently present a case study with a well known industrial physical design (PD) tool.

\subsection{An Industrial Case Study}
In order to study the impact of early global routing presented in this work (also in \cite{karb2} on the existing physical design flow, we conducted an industrial case study by developing a framework that can integrate HGR (and also STAIRoute) with industrial PD tool \textit{Olympus-SoC} \cite{olympus}, as outlined in Fig. \ref{fig:olympusflow}. This framework consists of an interface for industry standard LEF/DEF exchange format between STAIRoute/HGR and Olympus tool. In this study, we used a design to be implemented with a $45nm$ physical design library \cite{nangate_45nm}, while the physical verification was done by Calibre tool suite \cite{calibre}.
\begin{figure}[!ht]
\centering
\includegraphics[scale=0.33]{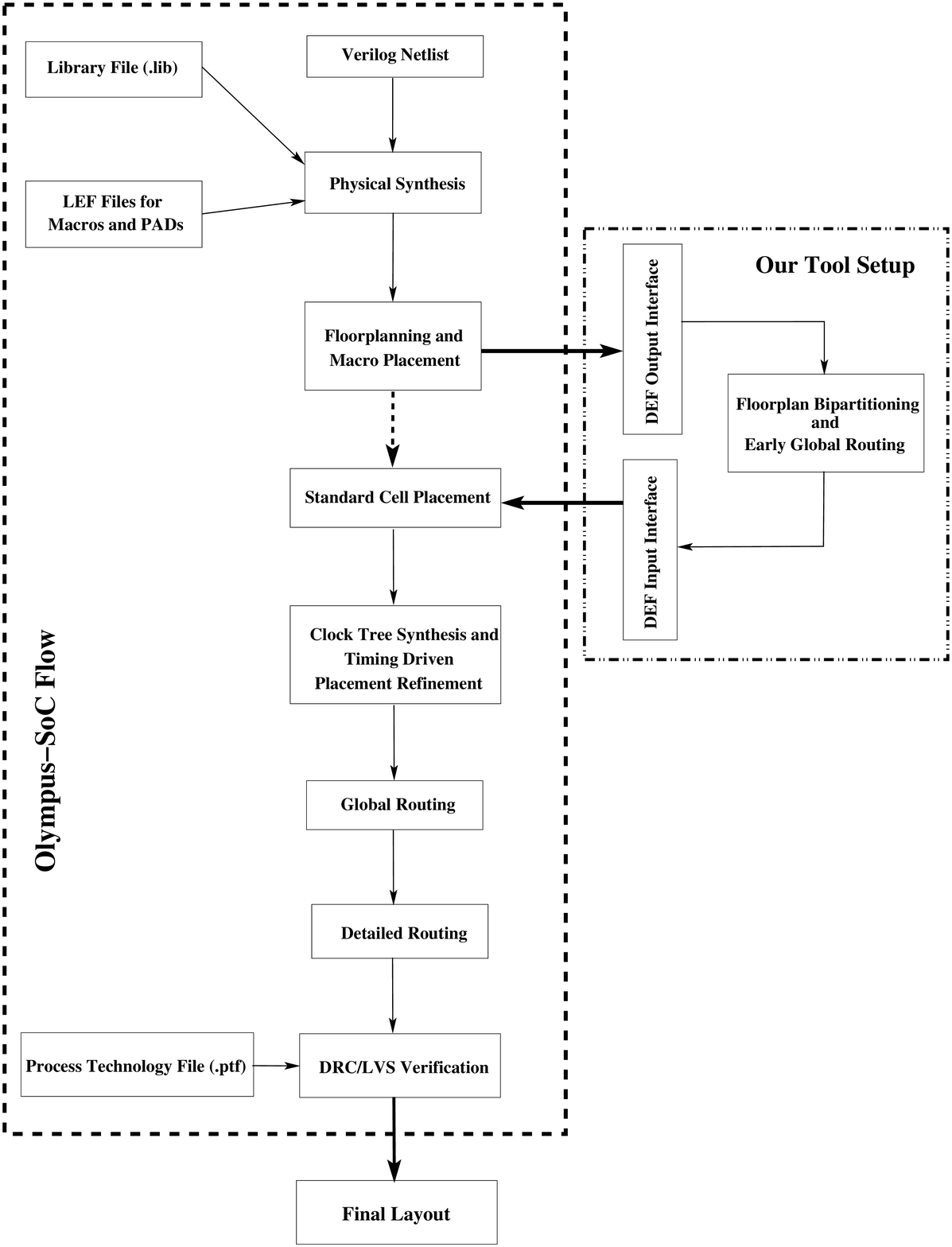}
\caption{Modified Olympus-SoC \cite{olympus} Physical Design (PD) flow with integrated Early Global Routing (HGR and STAIRoute \cite{karb2})}
\label{fig:olympusflow}
\end{figure}

In this exercise started with a floorplan instance of a design generated by Olympus-SoC tool. Default placement constraints were used for the subsequent placement engine and the design utilization target was set at $60\%$. Subsequent stages in Olympus implemented the design as per Fig. \ref{fig:olympusflow}, having no early routability assessment done on the floorplan. In another attempt, STAIRoute/HGR obtained the same floorplan instance through the LEF/DEF interface and obtained the early global routing solution for randomly chosen ($\gamma$, $\beta$) values such as ($0.5$, $0.2$) used to obtain the recursive floorplan bipartitioning results \cite{karb3} and define the routing regions. For routing, we used $10$ metal layers using preferred routing directions in each layer, horizontal being on the odd layer and vertical on the even, as supported by the given technology. In this study, we emphasized on timing driven placement and hence on parameters like TNS/WNS. Using both the approaches in Olympus, i.e. with and without early global routing, we obtained the final layout with no DRC/LVS violations, and also ensured zero timing violations in terms of non-negative WNS/TNS values. We used all other default values pertaining to Olympus-SoC tool for this exercise and noticed that timing was given more priority, than congestion/via minimization, using buffer insertion. This is the reason that Olympus tool used more buffer/inverter cells when STAIRoute was used. Hence more number of nets were introduced during timing driven standard cell placement, after early global routing (see Fig. \ref{fig:olympusflow} (b)). In all the cases, the target of $60\%$ placement density (utilization \%) was met while no timing violations were reported, and non-negative WNS/TNS values were achieved subsequently. It is important to note here that, the present day design philosophy goes for fixed outlined layout with predefined layout area. Therefore, higher utilization value may sometime lead to infeasible, sometimes incomplete, routing solution due to either (i) maximum usage of higher number of layers, and (ii) even the maximum number of layers permissible by the technology is not sufficient.
\begin{table}
\footnotesize
    \caption{Impact of early global routing on Olympus PD flow \cite{olympus}}
    \label{tab-Chap6:Study_with_egr_hgr_2}
    \begin{minipage}{\columnwidth}
    \centering
    \begin{tabular}{|c|r|r|r|}
      \hline
       \textbf{Parameter(s)} & \textbf{Olympus-SoC \cite{olympus}} & \textbf{Olympus \cite{olympus} } & \textbf{Olympus \cite{olympus}} \\
       & & + STAIRoute \cite{karb2} & + HGR \\
      \hline
      Standard Cells (Macros) & 16360 (4) & 16611 (4) & 16504 (4) \\ \hline
      Nets & 16474 & 16725 & 16618 \\ \hline
      Buffer/Inverter & 225/6527 & 225/6575 & 225/6441 \\ \hline
      Placed Area (Buf area) ($\mu m^2$) & 45909 (10104) & 45966 (10157) & 45834 (10018) \\ \hline
      Utilization (\%) & 59.17 & 59.24 & 59.07 \\ \hline
      Wirelength(mm) & 273.02 & 273.52 & 273.43 \\ \hline
      Via Count (x$10^3$) & 52.80 & 53.13 & 52.84 \\ \hline
      WNS/TNS (ns) & 0.00/0.00 & 0.00/0.00 & 0.00/0.00 \\ \hline
      Avg. Congestion (X/Y) & 0.123/0.111 & 0.122/0.112 & 0.122/0.112 \\ \hline
      Worst Congestion (X/Y) & 0.826/0.928 & 0.791/0.916 & 0.782/0.909 \\ \hline
      Edge Overflow (X/Y) & 0/0 & 0/0 & 0/0 \\ \hline
      DRC/LVS violations & No & No & No \\ \hline
      \# of Routing Layers & 8 & 8 & 8 \\ \hline
      CPU time (sec) & 1979 & 688 & 699\\ \hline
    \end{tabular}
    \end{minipage}
\end{table}

The corresponding results are presented in Tab \ref{tab-Chap6:Study_with_egr_hgr_2} for the given floorplan instance. For the results pertaining to early global routing, important observations can be made in worst/average congestion (in X/Y direction), via count, runtime and even timing despite insignificant increase (less than $1\%$) in cell and net count due to timing driven placement of standard cells by Olympus with early global routing than in standalone Olympus mode. Moreover, no DRC/LVS violations nor placement density constraint violations (utilization $\ngtr 60\%$) were noticed. Runtime for Olympus with STAIRoute/HGR is significantly improved due to fewer number of iterations in order to converge on an acceptable routing solution with no timing violations, being roughly three times faster than that for standalone Olympus, while results for HGR+Olympus are shown to be the best among all. Moreover, significant runtime improvement for the entire flow can also be noticed due to fewer number of iterations, as the timing driven placement optimization appears to have been guided by the early global routing results for both HGR and STAIRoute.

\section{Conclusion}
\label{sec:con}
In this paper, we present a new early global routing framework that facilitates over-the-block early global routing of the nets in a given floorplan. In this work, layer $M_2$ is assumed to be the maximum routing layer used for internal routing for macros/soft block. While STAIRoute \cite{karb2} identifies the routing paths of the nets through the monotone staircase routing regions in all routing layers, the proposed method HGR routes the nets through these regions only in lower metal layers ($M_1$, $M_2$). For higher routing layers, i.e. beyond $M_2$, HGR uses the free space over the blocks for routing the nets or parts of the nets. Alike STAIRoute, this model also attempts to address the pin access problem at the floorplan level by suitable edge definitions in the proposed hybrid routing graph, as presented in Section \ref{sec:hgr}. Therefore, HGR can be seen as a generic version of STAIRoute. If the maximum layer assumed to be reserved for internal routing is taken as the maximum allowable routing layer, then there will be no free space of over the macro/soft blocks. This implies that all the nets are to be routed through the monotone staircase regions situated at the boundary regions of each of the blocks, and applicable for all layers.

Experimental results presented in Section \ref{sec:result} show that HGR obtains smaller wirelength, fewer via count and also less congestion as compared to STAIRoute, with a more realistic early global routing approach by obtaining over-the-block routing paths beyond a specific routing layer. These results also ensure that the congestion values are constrained to $100\%$ as per the proposed congestion model, with $100\%$ routability. A comparison with the existing post-placement global routers also show that, even without detailed placement of standard cells, the proposed early global routing approach HGR can have a good idea of routability, wirelength and congestion values in a given floorplan instead of using primitive metrics like half perimeter wirelength (HPWL) and pre-routing probabilistic congestion analysis. The industrial case study also indicates the potential of early global routing framework in guiding acceptable final routing solutions in fewer iterations, and with fewer/no violations due to design for manufacturibility issues and also leverage on minimalistic compromise on the performance metrics like speed, within a competitive time frame.

Based on the results from the industrial case study, we plan for a new placement and routing framework guided by this generic early global routing method for a given floorplan solution and subsequently incorporate the DFM issues with enhanced models using extensive simulations results for smaller VDSM process nodes. We also plan to explore the scope of this work in $3D$ IC design flow.


\renewcommand\bibname{References}

\begin{thebibliography}{99}
\ifx \showCODEN    \undefined \def \showCODEN     #1{\unskip}     \fi
\ifx \showDOI      \undefined \def \showDOI       #1{#1}\fi
\ifx \showISBNx    \undefined \def \showISBNx     #1{\unskip}     \fi
\ifx \showISBNxiii \undefined \def \showISBNxiii  #1{\unskip}     \fi
\ifx \showISSN     \undefined \def \showISSN      #1{\unskip}     \fi
\ifx \showLCCN     \undefined \def \showLCCN      #1{\unskip}     \fi
\ifx \shownote     \undefined \def \shownote      #1{#1}          \fi
\ifx \showarticletitle \undefined \def \showarticletitle #1{#1}   \fi
\ifx \showURL      \undefined \def \showURL       {\relax}        \fi
\providecommand\bibfield[2]{#2}
\providecommand\bibinfo[2]{#2}
\providecommand\natexlab[1]{#1}
\providecommand\showeprint[2][]{arXiv:#2}

\bibitem[\protect\citeauthoryear{??}{cal}{[n. d.]}]%
        {calibre}
 \bibinfo{year}{[n. d.]}\natexlab{}.
\newblock \bibinfo{title}{{Calibre Tool Suite}, {Mentor Graphics Inc.}}
\newblock   (\bibinfo{year}{[n. d.]}).
\newblock
\showURL{%
\url{https://www.mentor.com/products/ic_nanometer_design/verification-signoff}}


\bibitem[\protect\citeauthoryear{??}{hbe}{[n. d.]}]%
        {hben}
 \bibinfo{year}{[n. d.]}\natexlab{}.
\newblock \bibinfo{title}{{IBM-HB Floorplanning Benchmarks}}.
\newblock   (\bibinfo{year}{[n. d.]}).
\newblock
\showURL{%
\url{https://cadlab.cs.ucla.edu/cpmo/HBsuite.html}}


\bibitem[\protect\citeauthoryear{??}{nan}{[n. d.]}]%
        {nangate_45nm}
 \bibinfo{year}{[n. d.]}\natexlab{}.
\newblock \bibinfo{title}{{NanGate} $45nm$ {OpenCell Library}}.
\newblock   (\bibinfo{year}{[n. d.]}).
\newblock
\showURL{%
\url{http://www.nangate.com/?page_id=2325}}


\bibitem[\protect\citeauthoryear{??}{oly}{[n. d.]}]%
        {olympus}
 \bibinfo{year}{[n. d.]}\natexlab{}.
\newblock \bibinfo{title}{{Olympus-SoC} tool, {Mentor Graphics Inc.}}
\newblock   (\bibinfo{year}{[n. d.]}).
\newblock
\showURL{%
\url{https://www.mentor.com/products/ic_nanometer_design/place-route/olympus-soc}}


\bibitem[\protect\citeauthoryear{??}{par}{[n. d.]}]%
        {parque}
 \bibinfo{year}{[n. d.]}\natexlab{}.
\newblock \bibinfo{title}{{Parquet} Floorplanner and MCNC/GSRC Floorplanning
  Benchmarks}.
\newblock   (\bibinfo{year}{[n. d.]}).
\newblock
\showURL{%
\url{https://vlsicad.eecs.umich.edu/BK/parquet}}


\bibitem[\protect\citeauthoryear{Adya and Markov}{Adya and Markov}{2003}]%
        {adya}
\bibfield{author}{\bibinfo{person}{S.~N. Adya} {and} \bibinfo{person}{I.~L.
  Markov}.} \bibinfo{year}{2003}\natexlab{}.
\newblock \showarticletitle{{Fixed-outline floorplanning: Enabling hierarchical
  design}}.
\newblock \bibinfo{journal}{{\em IEEE Transactions on Very Large Scale
  Integration (VLSI) Systems\/}} \bibinfo{volume}{11}, \bibinfo{number}{6}
  (\bibinfo{date}{Dec} \bibinfo{year}{2003}), \bibinfo{pages}{1120--1135}.
\newblock
\showISSN{1063-8210}
\showDOI{%
\url{https://doi.org/10.1109/TVLSI.2003.817546}}


\bibitem[\protect\citeauthoryear{Alpert, Li, Sze, and Wei}{Alpert
  et~al\mbox{.}}{2013}]%
        {chuk}
\bibfield{author}{\bibinfo{person}{C.J. Alpert}, \bibinfo{person}{Z. Li},
  \bibinfo{person}{C.N. Sze}, {and} \bibinfo{person}{Y. Wei}.}
  \bibinfo{year}{2013}\natexlab{}.
\newblock \bibinfo{title}{Consideration of local routing and pin access during
  {VLSI} global routing}.
\newblock   (\bibinfo{date}{April~9} \bibinfo{year}{2013}).
\newblock
\showURL{%
\url{http://www.google.ch/patents/US8418113}}
\newblock
\shownote{{US Patent} 8,418,113.}


\bibitem[\protect\citeauthoryear{Batterywala, Shenoy, Nicholls, and
  Zhou}{Batterywala et~al\mbox{.}}{2002}]%
        {bats}
\bibfield{author}{\bibinfo{person}{S. Batterywala}, \bibinfo{person}{N.
  Shenoy}, \bibinfo{person}{W. Nicholls}, {and} \bibinfo{person}{H. Zhou}.}
  \bibinfo{year}{2002}\natexlab{}.
\newblock \showarticletitle{{Track assignment: a desirable intermediate step
  between global routing and detailed routing}}. In \bibinfo{booktitle}{{\em
  IEEE/ACM International Conference on Computer Aided Design, 2002. ICCAD
  2002.}} \bibinfo{pages}{59--66}.
\newblock
\showISSN{1092-3152}
\showDOI{%
\url{https://doi.org/10.1109/ICCAD.2002.1167514}}


\bibitem[\protect\citeauthoryear{Cao, Jing, Xiong, Hu, Feng, He, and Hong}{Cao
  et~al\mbox{.}}{2008}]%
        {zcao}
\bibfield{author}{\bibinfo{person}{Z. Cao}, \bibinfo{person}{T.~T. Jing},
  \bibinfo{person}{J. Xiong}, \bibinfo{person}{Y. Hu}, \bibinfo{person}{Z
  Feng}, \bibinfo{person}{L. He}, {and} \bibinfo{person}{X.~L. Hong}.}
  \bibinfo{year}{2008}\natexlab{}.
\newblock \showarticletitle{{Fashion: A Fast and Accurate Solution to Global
  Routing Problem}}.
\newblock \bibinfo{journal}{{\em Computer-Aided Design of Integrated Circuits
  and Systems, IEEE Transactions on\/}} \bibinfo{volume}{27},
  \bibinfo{number}{4} (\bibinfo{date}{April} \bibinfo{year}{2008}),
  \bibinfo{pages}{726--737}.
\newblock
\showISSN{0278-0070}
\showDOI{%
\url{https://doi.org/10.1109/TCAD.2008.917590}}


\bibitem[\protect\citeauthoryear{Chang, Lee, Gao, Wu, and Wang}{Chang
  et~al\mbox{.}}{2010}]%
        {ychang}
\bibfield{author}{\bibinfo{person}{Y.~J. Chang}, \bibinfo{person}{Y.~T. Lee},
  \bibinfo{person}{J.~R. Gao}, \bibinfo{person}{P.~C. Wu}, {and}
  \bibinfo{person}{T.~C. Wang}.} \bibinfo{year}{2010}\natexlab{}.
\newblock \showarticletitle{{NTHU-Route 2.0: A Robust Global Router for Modern
  Designs}}.
\newblock \bibinfo{journal}{{\em IEEE Transactions on Computer-Aided Design of
  Integrated Circuits and Systems\/}} \bibinfo{volume}{29},
  \bibinfo{number}{12} (\bibinfo{date}{Dec} \bibinfo{year}{2010}),
  \bibinfo{pages}{1931--1944}.
\newblock
\showISSN{0278-0070}
\showDOI{%
\url{https://doi.org/10.1109/TCAD.2010.2061590}}


\bibitem[\protect\citeauthoryear{Chang and Lin}{Chang and Lin}{2004}]%
        {ywchang}
\bibfield{author}{\bibinfo{person}{Y.~W. Chang} {and} \bibinfo{person}{S.~P.
  Lin}.} \bibinfo{year}{2004}\natexlab{}.
\newblock \showarticletitle{{MR: a new framework for multilevel full-chip
  routing}}.
\newblock \bibinfo{journal}{{\em Computer-Aided Design of Integrated Circuits
  and Systems, IEEE Transactions on\/}} \bibinfo{volume}{23},
  \bibinfo{number}{5} (\bibinfo{date}{May} \bibinfo{year}{2004}),
  \bibinfo{pages}{793--800}.
\newblock
\showISSN{0278-0070}
\showDOI{%
\url{https://doi.org/10.1109/TCAD.2004.826547}}


\bibitem[\protect\citeauthoryear{Chen and Chang}{Chen and Chang}{2009}]%
        {hchen2}
\bibfield{author}{\bibinfo{person}{H.~Y. Chen} {and} \bibinfo{person}{Y.~W.
  Chang}.} \bibinfo{year}{2009}\natexlab{}.
\newblock \showarticletitle{{Routing for manufacturability and reliability}}.
\newblock \bibinfo{journal}{{\em IEEE Circuits and Systems Magazine\/}}
  \bibinfo{volume}{9}, \bibinfo{number}{3} (\bibinfo{date}{Third}
  \bibinfo{year}{2009}), \bibinfo{pages}{20--31}.
\newblock
\showISSN{1531-636X}
\showDOI{%
\url{https://doi.org/10.1109/MCAS.2009.933855}}


\bibitem[\protect\citeauthoryear{Cho, Lu, Yuan, and Pan}{Cho
  et~al\mbox{.}}{2009a}]%
        {mcho1}
\bibfield{author}{\bibinfo{person}{M. Cho}, \bibinfo{person}{K. Lu},
  \bibinfo{person}{K. Yuan}, {and} \bibinfo{person}{D.~Z. Pan}.}
  \bibinfo{year}{2009}\natexlab{a}.
\newblock \showarticletitle{{BoxRouter 2.0: A Hybrid and Robust Global Router
  with Layer Assignment for Routability}}.
\newblock \bibinfo{journal}{{\em ACM Trans. Des. Autom. Electron. Syst.\/}}
  \bibinfo{volume}{14}, \bibinfo{number}{2}, Article \bibinfo{articleno}{32}
  (\bibinfo{date}{April} \bibinfo{year}{2009}), \bibinfo{numpages}{21}~pages.
\newblock
\showISSN{1084-4309}
\showDOI{%
\url{https://doi.org/10.1145/1497561.1497575}}


\bibitem[\protect\citeauthoryear{Cho, Yuan, Ban, and Pan}{Cho
  et~al\mbox{.}}{2009b}]%
        {mcho2}
\bibfield{author}{\bibinfo{person}{M. Cho}, \bibinfo{person}{K. Yuan},
  \bibinfo{person}{Y. Ban}, {and} \bibinfo{person}{D.Z. Pan}.}
  \bibinfo{year}{2009}\natexlab{b}.
\newblock \showarticletitle{{ELIAD: Efficient Lithography Aware Detailed
  Routing Algorithm With Compact and Macro Post-OPC Printability Prediction}}.
\newblock \bibinfo{journal}{{\em Computer-Aided Design of Integrated Circuits
  and Systems, IEEE Transactions on\/}} \bibinfo{volume}{28},
  \bibinfo{number}{7} (\bibinfo{date}{July} \bibinfo{year}{2009}),
  \bibinfo{pages}{1006--1016}.
\newblock
\showISSN{0278-0070}
\showDOI{%
\url{https://doi.org/10.1109/TCAD.2009.2018876}}


\bibitem[\protect\citeauthoryear{Chu and Wong}{Chu and Wong}{2008}]%
        {cchu}
\bibfield{author}{\bibinfo{person}{C. Chu} {and} \bibinfo{person}{Y.~C. Wong}.}
  \bibinfo{year}{2008}\natexlab{}.
\newblock \showarticletitle{{FLUTE: Fast Lookup Table Based Rectilinear Steiner
  Minimal Tree Algorithm for VLSI Design}}.
\newblock \bibinfo{journal}{{\em Computer-Aided Design of Integrated Circuits
  and Systems, IEEE Transactions on\/}} \bibinfo{volume}{27},
  \bibinfo{number}{1} (\bibinfo{date}{Jan} \bibinfo{year}{2008}),
  \bibinfo{pages}{70--83}.
\newblock
\showISSN{0278-0070}
\showDOI{%
\url{https://doi.org/10.1109/TCAD.2007.907068}}


\bibitem[\protect\citeauthoryear{Cong, Fang, Xie, and Zhang}{Cong
  et~al\mbox{.}}{2005}]%
        {jcong}
\bibfield{author}{\bibinfo{person}{J. Cong}, \bibinfo{person}{J. Fang},
  \bibinfo{person}{M. Xie}, {and} \bibinfo{person}{Y. Zhang}.}
  \bibinfo{year}{2005}\natexlab{}.
\newblock \showarticletitle{{MARS- a multilevel full-chip gridless routing
  system}}.
\newblock \bibinfo{journal}{{\em Computer-Aided Design of Integrated Circuits
  and Systems, IEEE Transactions on\/}} \bibinfo{volume}{24},
  \bibinfo{number}{3} (\bibinfo{date}{March} \bibinfo{year}{2005}),
  \bibinfo{pages}{382--394}.
\newblock
\showISSN{0278-0070}
\showDOI{%
\url{https://doi.org/10.1109/TCAD.2004.842803}}


\bibitem[\protect\citeauthoryear{Cormen, Leiserson, Rivest, and Stein}{Cormen
  et~al\mbox{.}}{2009}]%
        {cormen}
\bibfield{author}{\bibinfo{person}{T.~H. Cormen}, \bibinfo{person}{C.~E.
  Leiserson}, \bibinfo{person}{R.~L. Rivest}, {and} \bibinfo{person}{C.
  Stein}.} \bibinfo{year}{2009}\natexlab{}.
\newblock \bibinfo{booktitle}{{\em Introduction to Algorithms\/}
  (\bibinfo{edition}{3rd} ed.)}.
\newblock \bibinfo{publisher}{MIT Press}, \bibinfo{address}{Cambridge, MA,
  USA}.
\newblock
\showISBNx{9780262533058}


\bibitem[\protect\citeauthoryear{Ding, Gao, Yuan, and Pan}{Ding
  et~al\mbox{.}}{2011}]%
        {dding}
\bibfield{author}{\bibinfo{person}{D. Ding}, \bibinfo{person}{J.~R. Gao},
  \bibinfo{person}{K. Yuan}, {and} \bibinfo{person}{D.Z. Pan}.}
  \bibinfo{year}{2011}\natexlab{}.
\newblock \showarticletitle{{AENEID: A generic lithography-friendly detailed
  router based on post-RET data learning and hotspot detection}}. In
  \bibinfo{booktitle}{{\em Design Automation Conference (DAC), 2011 48th
  ACM/EDAC/IEEE}}. \bibinfo{pages}{795--800}.
\newblock
\showISSN{0738-100x}


\bibitem[\protect\citeauthoryear{Kar, Sur-Kolay, and Mandal}{Kar
  et~al\mbox{.}}{2013}]%
        {karb2}
\bibfield{author}{\bibinfo{person}{B. Kar}, \bibinfo{person}{S. Sur-Kolay},
  {and} \bibinfo{person}{C. Mandal}.} \bibinfo{year}{2013}\natexlab{}.
\newblock \showarticletitle{{STAIRoute}: {Global} routing using monotone
  staircase channels}. In \bibinfo{booktitle}{{\em {IEEE} Computer Society
  Annual Symposium on VLSI, {ISVLSI} 2013, Natal, Brazil, August 5-7, 2013}}.
  \bibinfo{pages}{90--95}.
\newblock


\bibitem[\protect\citeauthoryear{Kar, Sur-Kolay, and Mandal}{Kar
  et~al\mbox{.}}{2014}]%
        {karb3}
\bibfield{author}{\bibinfo{person}{B. Kar}, \bibinfo{person}{S. Sur-Kolay},
  {and} \bibinfo{person}{C. Mandal}.} \bibinfo{year}{2014}\natexlab{}.
\newblock \showarticletitle{{Global Routing Using Monotone Staircases with
  Minimal Bends}}. In \bibinfo{booktitle}{{\em 2014 27th International
  Conference on {VLSI} Design, {VLSID} 2014, Mumbai, India, January 5-9,
  2014}}. \bibinfo{pages}{369--374}.
\newblock


\bibitem[\protect\citeauthoryear{Kar, Sur-Kolay, and Mandal}{Kar
  et~al\mbox{.}}{2015}]%
        {karb4}
\bibfield{author}{\bibinfo{person}{B. Kar}, \bibinfo{person}{S. Sur-Kolay},
  {and} \bibinfo{person}{C. Mandal}.} \bibinfo{year}{2015}\natexlab{}.
\newblock \showarticletitle{{A New Method for Defining Monotone Staircases in
  {VLSI} Floorplans}}. In \bibinfo{booktitle}{{\em 2015 {IEEE} Computer Society
  Annual Symposium on VLSI, {ISVLSI} 2015, Montpellier, France, July 8-10,
  2015}}. \bibinfo{pages}{107--112}.
\newblock


\bibitem[\protect\citeauthoryear{Kar, Sur-Kolay, and Mandal}{Kar
  et~al\mbox{.}}{2016}]%
        {karb5}
\bibfield{author}{\bibinfo{person}{B. Kar}, \bibinfo{person}{S. Sur-Kolay},
  {and} \bibinfo{person}{C. Mandal}.} \bibinfo{year}{2016}\natexlab{}.
\newblock \showarticletitle{{A Novel {EPE} Aware Hybrid Global Route Planner
  after Floorplanning}}. In \bibinfo{booktitle}{{\em 29th International
  Conference on {VLSI} Design, {VLSID} 2016, Kolkata, India, January 4-8,
  2016}}. \bibinfo{pages}{595--596}.
\newblock


\bibitem[\protect\citeauthoryear{Kar, Sur-Kolay, Rangarajan, and Mandal}{Kar
  et~al\mbox{.}}{2012}]%
        {karb}
\bibfield{author}{\bibinfo{person}{B. Kar}, \bibinfo{person}{S. Sur-Kolay},
  \bibinfo{person}{S.~H. Rangarajan}, {and} \bibinfo{person}{C. Mandal}.}
  \bibinfo{year}{2012}\natexlab{}.
\newblock \showarticletitle{A Faster Hierarchical Balanced Bipartitioner for
  {VLSI} Floorplans Using {Monotone Staircase Cuts}}. In
  \bibinfo{booktitle}{{\em Progress in {VLSI} Design and Test - 16th
  International Symposium, {VDAT} 2012, Shibpur, India, July 1-4, 2012.
  Proceedings}}. \bibinfo{pages}{327--336}.
\newblock


\bibitem[\protect\citeauthoryear{Kastner, Bozorgzadeh, and Sarrafzadeh}{Kastner
  et~al\mbox{.}}{2002}]%
        {kast}
\bibfield{author}{\bibinfo{person}{R. Kastner}, \bibinfo{person}{E.
  Bozorgzadeh}, {and} \bibinfo{person}{M. Sarrafzadeh}.}
  \bibinfo{year}{2002}\natexlab{}.
\newblock \showarticletitle{{Pattern routing: use and theory for increasing
  predictability and avoiding coupling}}.
\newblock \bibinfo{journal}{{\em Computer-Aided Design of Integrated Circuits
  and Systems, IEEE Transactions on\/}} \bibinfo{volume}{21},
  \bibinfo{number}{7} (\bibinfo{date}{Jul} \bibinfo{year}{2002}),
  \bibinfo{pages}{777--790}.
\newblock
\showISSN{0278-0070}
\showDOI{%
\url{https://doi.org/10.1109/TCAD.2002.1013891}}


\bibitem[\protect\citeauthoryear{Li, Alpert, Quay, Sapatnekar, and Shi}{Li
  et~al\mbox{.}}{2007}]%
        {liz}
\bibfield{author}{\bibinfo{person}{Z. Li}, \bibinfo{person}{C.J. Alpert},
  \bibinfo{person}{S.T. Quay}, \bibinfo{person}{S. Sapatnekar}, {and}
  \bibinfo{person}{W. Shi}.} \bibinfo{year}{2007}\natexlab{}.
\newblock \showarticletitle{{Probabilistic Congestion Prediction with Partial
  Blockages}}. In \bibinfo{booktitle}{{\em Quality Electronic Design, 2007.
  ISQED '07. 8th International Symposium on}}. \bibinfo{pages}{841--846}.
\newblock
\showDOI{%
\url{https://doi.org/10.1109/ISQED.2007.124}}


\bibitem[\protect\citeauthoryear{Lin and Chu}{Lin and Chu}{2014}]%
        {tlin}
\bibfield{author}{\bibinfo{person}{T. Lin} {and} \bibinfo{person}{C. Chu}.}
  \bibinfo{year}{2014}\natexlab{}.
\newblock \showarticletitle{{POLAR 2.0: An effective routability-driven
  placer}}. In \bibinfo{booktitle}{{\em Design Automation Conference (DAC),
  2014 51st ACM/EDAC/IEEE}}. \bibinfo{pages}{1--6}.
\newblock


\bibitem[\protect\citeauthoryear{Liu, Kao, Li, and Chao}{Liu
  et~al\mbox{.}}{2013a}]%
        {wliu}
\bibfield{author}{\bibinfo{person}{W.~H. Liu}, \bibinfo{person}{W.~C. Kao},
  \bibinfo{person}{Y.~L. Li}, {and} \bibinfo{person}{K.~Y. Chao}.}
  \bibinfo{year}{2013}\natexlab{a}.
\newblock \showarticletitle{{NCTU-GR 2.0: Multithreaded Collision-Aware Global
  Routing With Bounded-Length Maze Routing}}.
\newblock \bibinfo{journal}{{\em Computer-Aided Design of Integrated Circuits
  and Systems, IEEE Transactions on\/}} \bibinfo{volume}{32},
  \bibinfo{number}{5} (\bibinfo{date}{May} \bibinfo{year}{2013}),
  \bibinfo{pages}{709--722}.
\newblock
\showISSN{0278-0070}
\showDOI{%
\url{https://doi.org/10.1109/TCAD.2012.2235124}}


\bibitem[\protect\citeauthoryear{Liu, Koh, and Li}{Liu et~al\mbox{.}}{2013b}]%
        {liuw}
\bibfield{author}{\bibinfo{person}{W.~H. Liu}, \bibinfo{person}{C.~K. Koh},
  {and} \bibinfo{person}{Y.~L. Li}.} \bibinfo{year}{2013}\natexlab{b}.
\newblock \showarticletitle{{Optimization of placement solutions for
  routability}}. In \bibinfo{booktitle}{{\em Design Automation Conference
  (DAC), 2013 50th ACM/EDAC/IEEE}}. \bibinfo{pages}{1--9}.
\newblock
\showISSN{0738-100X}


\bibitem[\protect\citeauthoryear{Majumder, Sur-Kolay, Nandy, and
  Bhattacharya}{Majumder et~al\mbox{.}}{2004}]%
        {majum1}
\bibfield{author}{\bibinfo{person}{S. Majumder}, \bibinfo{person}{S.
  Sur-Kolay}, \bibinfo{person}{S.~C. Nandy}, {and} \bibinfo{person}{B.~B.
  Bhattacharya}.} \bibinfo{year}{2004}\natexlab{}.
\newblock \showarticletitle{{On Finding a Staircase Channel with Minimum
  Crossing Nets in a VLSI Floorplan}}.
\newblock \bibinfo{journal}{{\em Journal of Circuits, Systems and Computers\/}}
  \bibinfo{volume}{13}, \bibinfo{number}{05} (\bibinfo{year}{2004}),
  \bibinfo{pages}{1019--1038}.
\newblock
\showDOI{%
\url{https://doi.org/10.1142/S0218126604001854}}
\showeprint{http://www.worldscientific.com/doi/pdf/10.1142/S0218126604001854}


\bibitem[\protect\citeauthoryear{Mitra, Yu, and Pan}{Mitra
  et~al\mbox{.}}{2005}]%
        {mitra}
\bibfield{author}{\bibinfo{person}{J. Mitra}, \bibinfo{person}{P. Yu}, {and}
  \bibinfo{person}{D.Z. Pan}.} \bibinfo{year}{2005}\natexlab{}.
\newblock \showarticletitle{{RADAR: RET-aware detailed routing using fast
  lithography simulations}}. In \bibinfo{booktitle}{{\em Design Automation
  Conference, 2005. Proceedings. 42nd}}. \bibinfo{pages}{369--372}.
\newblock
\showDOI{%
\url{https://doi.org/10.1109/DAC.2005.193836}}


\bibitem[\protect\citeauthoryear{Moffitt}{Moffitt}{2008}]%
        {moffit}
\bibfield{author}{\bibinfo{person}{M.~D. Moffitt}.}
  \bibinfo{year}{2008}\natexlab{}.
\newblock \showarticletitle{{MaizeRouter: Engineering an Effective Global
  Router}}.
\newblock \bibinfo{journal}{{\em IEEE Transactions on Computer-Aided Design of
  Integrated Circuits and Systems\/}} \bibinfo{volume}{27},
  \bibinfo{number}{11} (\bibinfo{date}{Nov} \bibinfo{year}{2008}),
  \bibinfo{pages}{2017--2026}.
\newblock
\showISSN{0278-0070}
\showDOI{%
\url{https://doi.org/10.1109/TCAD.2008.2006082}}


\bibitem[\protect\citeauthoryear{Ozdal and Wong}{Ozdal and Wong}{2009}]%
        {ozdal}
\bibfield{author}{\bibinfo{person}{M.~M. Ozdal} {and} \bibinfo{person}{M.~D.~F.
  Wong}.} \bibinfo{year}{2009}\natexlab{}.
\newblock \showarticletitle{{Archer: A History-Based Global Routing
  Algorithm}}.
\newblock \bibinfo{journal}{{\em IEEE Transactions on Computer-Aided Design of
  Integrated Circuits and Systems\/}} \bibinfo{volume}{28}, \bibinfo{number}{4}
  (\bibinfo{date}{April} \bibinfo{year}{2009}), \bibinfo{pages}{528--540}.
\newblock
\showISSN{0278-0070}
\showDOI{%
\url{https://doi.org/10.1109/TCAD.2009.2013991}}


\bibitem[\protect\citeauthoryear{Pan and Chu}{Pan and Chu}{2006}]%
        {panm1}
\bibfield{author}{\bibinfo{person}{M. Pan} {and} \bibinfo{person}{C. Chu}.}
  \bibinfo{year}{2006}\natexlab{}.
\newblock \showarticletitle{{FastRoute: A Step to Integrate Global Routing into
  Placement}}. In \bibinfo{booktitle}{{\em Computer-Aided Design, 2006. ICCAD
  '06. IEEE/ACM International Conference on}}. \bibinfo{pages}{464--471}.
\newblock
\showISSN{1092-3152}
\showDOI{%
\url{https://doi.org/10.1109/ICCAD.2006.320159}}


\bibitem[\protect\citeauthoryear{Pan and Chu}{Pan and Chu}{2007}]%
        {panm2}
\bibfield{author}{\bibinfo{person}{M. Pan} {and} \bibinfo{person}{C. Chu}.}
  \bibinfo{year}{2007}\natexlab{}.
\newblock \showarticletitle{{IPR: An Integrated Placement and Routing
  Algorithm}}. In \bibinfo{booktitle}{{\em Design Automation Conference, 2007.
  DAC '07. 44th ACM/IEEE}}. \bibinfo{pages}{59--62}.
\newblock
\showISSN{0738-100X}


\bibitem[\protect\citeauthoryear{Roy and Markov}{Roy and Markov}{2008}]%
        {royj}
\bibfield{author}{\bibinfo{person}{J.A. Roy} {and} \bibinfo{person}{I.L.
  Markov}.} \bibinfo{year}{2008}\natexlab{}.
\newblock \showarticletitle{{High-Performance Routing at the Nanometer Scale}}.
\newblock \bibinfo{journal}{{\em Computer-Aided Design of Integrated Circuits
  and Systems, IEEE Transactions on\/}} \bibinfo{volume}{27},
  \bibinfo{number}{6} (\bibinfo{date}{June} \bibinfo{year}{2008}),
  \bibinfo{pages}{1066--1077}.
\newblock
\showISSN{0278-0070}
\showDOI{%
\url{https://doi.org/10.1109/TCAD.2008.923255}}


\bibitem[\protect\citeauthoryear{Sherwani}{Sherwani}{1995}]%
        {sherw}
\bibfield{author}{\bibinfo{person}{N.~A. Sherwani}.}
  \bibinfo{year}{1995}\natexlab{}.
\newblock \bibinfo{booktitle}{{\em Algorithms for VLSI Physical Design
  Automation\/} (\bibinfo{edition}{2nd} ed.)}.
\newblock \bibinfo{publisher}{Kluwer Academic Publishers},
  \bibinfo{address}{Norwell, MA, USA}.
\newblock
\showISBNx{0792395921}


\bibitem[\protect\citeauthoryear{Sur-Kolay and Bhattacharya}{Sur-Kolay and
  Bhattacharya}{1991}]%
        {ssk}
\bibfield{author}{\bibinfo{person}{S. Sur-Kolay} {and} \bibinfo{person}{B.~B.
  Bhattacharya}.} \bibinfo{year}{1991}\natexlab{}.
\newblock \showarticletitle{{The cycle structure of channel graphs in
  nonsliceable floorplans and a unified algorithm for feasible routing order}}.
  In \bibinfo{booktitle}{{\em IEEE International Conference on Computer Design:
  VLSI in Computers and Processors}}. \bibinfo{pages}{524--527}.
\newblock
\showDOI{%
\url{https://doi.org/10.1109/ICCD.1991.139964}}


\bibitem[\protect\citeauthoryear{Viswanathan and Chu}{Viswanathan and
  Chu}{2005}]%
        {viswa}
\bibfield{author}{\bibinfo{person}{N. Viswanathan} {and} \bibinfo{person}{C.
  Chu}.} \bibinfo{year}{2005}\natexlab{}.
\newblock \showarticletitle{{FastPlace: efficient analytical placement using
  cell shifting, iterative local refinement,and a hybrid net model}}.
\newblock \bibinfo{journal}{{\em Computer-Aided Design of Integrated Circuits
  and Systems, IEEE Transactions on\/}} \bibinfo{volume}{24},
  \bibinfo{number}{5} (\bibinfo{date}{May} \bibinfo{year}{2005}),
  \bibinfo{pages}{722--733}.
\newblock
\showISSN{0278-0070}
\showDOI{%
\url{https://doi.org/10.1109/TCAD.2005.846365}}


\bibitem[\protect\citeauthoryear{Wei, Sze, Viswanathan, Li, Alpert, Reddy,
  Huber, Tellez, Keller, and Sapatnekar}{Wei et~al\mbox{.}}{2012}]%
        {weiy}
\bibfield{author}{\bibinfo{person}{Y. Wei}, \bibinfo{person}{C. Sze},
  \bibinfo{person}{N. Viswanathan}, \bibinfo{person}{Z. Li},
  \bibinfo{person}{C.J. Alpert}, \bibinfo{person}{L. Reddy},
  \bibinfo{person}{A.D. Huber}, \bibinfo{person}{G.E. Tellez},
  \bibinfo{person}{D. Keller}, {and} \bibinfo{person}{S.S. Sapatnekar}.}
  \bibinfo{year}{2012}\natexlab{}.
\newblock \showarticletitle{{GLARE: Global and local wiring aware routability
  evaluation}}. In \bibinfo{booktitle}{{\em Design Automation Conference (DAC),
  2012 49th ACM/EDAC/IEEE}}. \bibinfo{pages}{768--773}.
\newblock
\showISSN{0738-100X}


\bibitem[\protect\citeauthoryear{Westra, Bartels, and Groeneveld}{Westra
  et~al\mbox{.}}{2004}]%
        {westra}
\bibfield{author}{\bibinfo{person}{J. Westra}, \bibinfo{person}{C. Bartels},
  {and} \bibinfo{person}{P. Groeneveld}.} \bibinfo{year}{2004}\natexlab{}.
\newblock \showarticletitle{{Probabilistic Congestion Prediction}}. In
  \bibinfo{booktitle}{{\em Proceedings of the 2004 International Symposium on
  Physical Design}} {\em (\bibinfo{series}{ISPD '04})}.
  \bibinfo{publisher}{ACM}, \bibinfo{address}{New York, NY, USA},
  \bibinfo{pages}{204--209}.
\newblock
\showISBNx{1-58113-817-2}
\showDOI{%
\url{https://doi.org/10.1145/981066.981110}}


\bibitem[\protect\citeauthoryear{Xu and Chu}{Xu and Chu}{2011}]%
        {xuy2}
\bibfield{author}{\bibinfo{person}{Y. Xu} {and} \bibinfo{person}{C. Chu}.}
  \bibinfo{year}{2011}\natexlab{}.
\newblock \showarticletitle{{MGR: Multi-level global router}}. In
  \bibinfo{booktitle}{{\em Computer-Aided Design (ICCAD), 2011 IEEE/ACM
  International Conference on}}. \bibinfo{pages}{250--255}.
\newblock
\showISSN{1092-3152}
\showDOI{%
\url{https://doi.org/10.1109/ICCAD.2011.6105336}}


\bibitem[\protect\citeauthoryear{Xu, Zhang, and Chu}{Xu et~al\mbox{.}}{2009}]%
        {xuy}
\bibfield{author}{\bibinfo{person}{Y. Xu}, \bibinfo{person}{Y. Zhang}, {and}
  \bibinfo{person}{C. Chu}.} \bibinfo{year}{2009}\natexlab{}.
\newblock \showarticletitle{{FastRoute 4.0: Global router with efficient via
  minimization}}. In \bibinfo{booktitle}{{\em Design Automation Conference,
  2009. ASP-DAC 2009. Asia and South Pacific}}. \bibinfo{pages}{576--581}.
\newblock
\showDOI{%
\url{https://doi.org/10.1109/ASPDAC.2009.4796542}}


\bibitem[\protect\citeauthoryear{Zhang and Chu}{Zhang and Chu}{2012}]%
        {zhang}
\bibfield{author}{\bibinfo{person}{Y. Zhang} {and} \bibinfo{person}{C. Chu}.}
  \bibinfo{year}{2012}\natexlab{}.
\newblock \showarticletitle{{GDRouter: Interleaved global routing and detailed
  routing for ultimate routability}}. In \bibinfo{booktitle}{{\em Design
  Automation Conference (DAC), 2012 49th ACM/EDAC/IEEE}}.
  \bibinfo{pages}{597--602}.
\newblock
\showISSN{0738-100X}


\bibitem[\protect\citeauthoryear{Zhang and Chu}{Zhang and Chu}{2013}]%
        {zhang3}
\bibfield{author}{\bibinfo{person}{Y. Zhang} {and} \bibinfo{person}{C. Chu}.}
  \bibinfo{year}{2013}\natexlab{}.
\newblock \showarticletitle{{RegularRoute: An Efficient Detailed Router
  Applying Regular Routing Patterns}}.
\newblock \bibinfo{journal}{{\em Very Large Scale Integration (VLSI) Systems,
  IEEE Transactions on\/}} \bibinfo{volume}{21}, \bibinfo{number}{9}
  (\bibinfo{date}{Sept} \bibinfo{year}{2013}), \bibinfo{pages}{1655--1668}.
\newblock
\showISSN{1063-8210}
\showDOI{%
\url{https://doi.org/10.1109/TVLSI.2012.2214491}}


\bibitem[\protect\citeauthoryear{Zhang, Xu, and Chu}{Zhang
  et~al\mbox{.}}{2008}]%
        {zhang2}
\bibfield{author}{\bibinfo{person}{Y. Zhang}, \bibinfo{person}{Y. Xu}, {and}
  \bibinfo{person}{C. Chu}.} \bibinfo{year}{2008}\natexlab{}.
\newblock \showarticletitle{{FastRoute3.0: A fast and high quality global
  router based on virtual capacity}}. In \bibinfo{booktitle}{{\em
  Computer-Aided Design, 2008. ICCAD 2008. IEEE/ACM International Conference
  on}}. \bibinfo{pages}{344--349}.
\newblock
\showISSN{1092-3152}
\showDOI{%
\url{https://doi.org/10.1109/ICCAD.2008.4681596}}

\end{thebibliography}


\end{document}